\documentclass[aps,prl,reprint,longbibliography,twocolumn,superscriptaddress,nofootinbib,10pt]{revtex4-2}

\usepackage[utf8]{inputenc}
\usepackage[T1]{fontenc}
\usepackage{amsmath,amssymb}
\usepackage{graphicx}
\usepackage{braket}
\usepackage{xr}
\usepackage{hyperref}
\usepackage{bm}
\usepackage[capitalize]{cleveref}
\usepackage{color}
\usepackage{amsfonts,amsthm,dsfont}
\usepackage{booktabs}
\usepackage{cancel}
\usepackage{physics}
\usepackage{placeins}

\providecommand{\abs}[1]{\left\lvert #1\right\rvert}
\providecommand{\norm}[1]{\left\lVert #1\right\rVert}

\newtheorem{prop}{Proposition}
\newtheorem{lem}{Lemma}
\newtheorem{cor}{Corollary}

\crefname{thm}{theorem}{theorems}
\Crefname{thm}{Theorem}{Theorems}
\crefname{prop}{proposition}{propositions}
\Crefname{prop}{Proposition}{Propositions}
\crefname{lem}{lemma}{lemmas}
\Crefname{lem}{Lemma}{Lemmas}
\crefname{cor}{corollary}{corollaries}
\Crefname{cor}{Corollary}{Corollaries}

\hypersetup{
    hidelinks,
    pdftitle={Generalized Mermin Inequalities for Benchmarking Large-Scale GHZ States}
}

\graphicspath{{fig_GHZ/}}

\newcommand{\GHZ}[1]{\ket{\mathrm{GHZ}_{#1}}}
\newcommand{\MM}{\mathcal{M}}

\newcommand{\HFRC}[1][Hefei National Research Center for Physical Sciences at the Microscale and School of Physical Sciences, University of Science and Technology of China, Hefei 230026, China]{\affiliation{#1}}
\newcommand{\SHRC}[1][Shanghai Research Center for Quantum Science and CAS Center for Excellence in Quantum Information and Quantum Physics, University of Science and Technology of China, Shanghai 201315, China]{\affiliation{#1}}
\newcommand{\HFNL}[1][Hefei National Laboratory, University of Science and Technology of China, Hefei 230088, China]{\affiliation{#1}}

\newcommand{\LINKELAB}[1][LINKE lab, University of Science and Technology of China (USTC), Hefei 230027, China]{\affiliation{#1}}

\newcommand{\PKUCFCS}[1][Center on Frontiers of Computing Studies, Peking University, Beijing 100871, China]{\affiliation{#1}}
\newcommand{\PKUSCS}[1][School of Computer Science, Peking University, Beijing 100871, China]{\affiliation{#1}}

\newcommand{\SIEGEN}[1][Naturwissenschaftlich-Technische Fakult\"{a}t, Universit\"{a}t Siegen, Walter-Flex-Stra\ss e 3, 57068 Siegen, Germany]{\affiliation{#1}}
\newcommand{\INRIA}[1][CPHT, LIX, CNRS, Inria, \'{E}cole polytechnique, Institut Polytechnique de Paris, Palaiseau, France]{\affiliation{#1}}

\newcommand{\QCTek}[1][QuantumCTek Co., Ltd., Hefei 230088, China]{\affiliation{#1}}

\let\revtexmaketitle\maketitle
\let\revtexauthor\author
\let\revtexaffiliation\affiliation
\let\revtexemail\email
\let\revtexthanks\thanks

\begin{document}

\title{Generalized Mermin Inequalities for Benchmarking Large-Scale GHZ States}

\author{Jianbin Cai}
\thanks{These authors contributed equally to this work.}
\HFRC
\SHRC
\HFNL

\author{Junxiang Huang}
\thanks{These authors contributed equally to this work.}
\PKUCFCS
\PKUSCS

\author{Fynn Otto}
\thanks{These authors contributed equally to this work.}
\SIEGEN

\author{Yuan Li}
\thanks{These authors contributed equally to this work.}
\HFRC
\SHRC
\HFNL

\author{Carlos de Gois}
\SIEGEN
\INRIA

\author{Tao Jiang}
\HFRC
\SHRC
\HFNL

\author{Sirui Cao}
\HFRC
\SHRC
\HFNL

\author{Fangzheng Chen}
\LINKELAB

\author{Hao Fu}
\LINKELAB

\author{Jin Lin}
\HFRC
\SHRC
\HFNL

\author{Wei Xie}
\LINKELAB

\author{Naibin Zhou}
\HFRC
\SHRC
\HFNL

\author{Shibiao Tang}
\QCTek

\author{Xiang-Yang Li}
\HFNL
\LINKELAB

\author{Cheng-Zhi Peng}
\HFRC
\SHRC
\HFNL

\author{Xiao Yuan}
\PKUCFCS
\PKUSCS

\author{Otfried Gühne}
\email{otfried.guehne@uni-siegen.de}
\SIEGEN

\author{Ming Gong}
\email{minggong@ustc.edu.cn}
\HFRC
\SHRC
\HFNL

\begin{abstract}
Multipartite Bell tests provide a correlation-only route to benchmarking quantum processors, but their application at large scales is hindered by the rapid decay of many-body correlators under noise and exponentially many terms in conventional Bell expressions. 
Here we address these scalability obstacles by introducing a finite-setting generalized Mermin family of state-tailored Bell inequalities with analytic certification bounds, 
in which the measurement-setting number $m$ provides an additional certification dimension complementary to the system size $n$.
We show that, for the powers-of-two setting choices considered here, increasing $m$ leaves the ideal normalized multipartite quantum value unchanged while lowering the relevant classical bounds, thereby strengthening the Bell-violation ratios and yielding an improved noise-robustness scaling compared to the standard Mermin inequality.  
We test this construction experimentally on a programmable superconducting processor by preparing Greenberger-Horne-Zeilinger (GHZ) states of up to 80 qubits. Using randomized sampling for direct Bell-operator estimation, we observe Bell ratios that grow exponentially with system size, certify a nonlocality depth of 14, and show that increasing $m$ strengthens both the Bell ratio and depth certification. 
All results are obtained solely from measured correlators and analytical bounds, without readout correction, tomography, or model-based mitigation. 
Generalized Mermin inequalities therefore provide a sharper Bell benchmark for noisy large-scale GHZ states. 

\end{abstract}

\maketitle

Bell inequalities rule out local-realistic descriptions of observed correlations, with their violation witnessing Bell nonlocality--a stronger form of nonclassicality than entanglement or Einstein-Podolsky-Rosen steering~\cite{EPR35,Bell1964,brunner2014bell,horodecki2009quantum,reid2009colloquium,he2012genuine,wiseman2006steering,jones2007entanglement}. 
In multipartite systems, they can further determine how many parties participate in the observed nonclassical correlations, leading to notions such as nonlocality depth and Bell-correlation depth~\cite{bancal2009Quantifying,baccari2019bell,bernards2023bell}. 
Because these conclusions rely only on measured correlators and analytical classical bounds, multipartite Bell tests provide natural correlation-only benchmarks for large quantum processors and connect, in more idealized architectures, to device-independent (DI) cryptography, self-testing, and communication-complexity advantages~\cite{acin2007device,pironio2009device,vsupic2020self,brukner2004bells,buhrman2010nonlocality}. 

However, multipartite Bell observables are increasingly fragile under decoherence, while the number of relevant global setting combinations or correlator terms can grow rapidly with system size. Existing multipartite Bell experiments therefore either remain at modest sizes, rely on additional modeling assumptions, or focus on correlation-depth witnesses rather than direct Bell tests~\cite{monz201114,schmied2016bell,ansmann2009violation,storz2023loophole,yang2022testing,lanyon2014experimental,wang2025probing}. 
Theoretical studies have shown that this logic can be partly reversed: with suitably chosen Bell operators, large scale can become an advantage, yielding Bell ratios that grow exponentially with system size~\cite{mermin1990extreme,ardehali1992bell,guhne2008generalized,bonsel2025generating} (Fig.~\ref{fig:protocol}a). Greenberger-Horne-Zeilinger (GHZ) states provide a natural arena for this idea, due to their analytically tractable Bell values and mature state-preparation techniques on current hardware~\cite{mermin1990extreme,ardehali1992bell,bao2024creating}. On noisy NISQ-era processors~\cite{preskill2018quantum}, the standard two-setting Mermin inequality does not fully exploit the available nonclassical correlations. 
Its fixed classical threshold often limits certification efficiency, leaving room to construct stronger Bell functionals by further suppressing the classical bound for the same noisy GHZ state.


Our central idea is to strengthen Bell certification by modifying the Bell functional rather than the state preparation. We promote the number of local measurement settings $m$ to a certification parameter complementary to the system size $n$ (Fig.~\ref{fig:protocol}b), thereby incorporating a broader family of equatorial GHZ correlators into the generalized Mermin operator.
For the setting choices considered here,
increasing $m$ preserves the ideal normalized quantum value while
lowering the relevant local and grouping-model bounds, and can therefore
enhance certification for the same noisy GHZ state.
We demonstrate this mechanism on the \textit{Zuchongzhi 3.1} superconducting processor~\cite{jiang2026one} using GHZ states of up to 80 qubits, observing exponentially growing Bell-violation ratios and stronger nonlocality-depth certification with increasing $m$.
Rather than inferring Bell values indirectly from fidelity or coherence diagnostics
and correlation-depth witnesses~\cite{lanyon2014experimental, wang2025probing}, we directly estimate the generalized Mermin operator by randomized sampling of its correlators.
In the correlation-only spirit of Bell benchmarking and self-testing approaches~\cite{vsupic2020self,bonsel2025generating}, all reported estimates rely solely on measured correlators and analytical bounds, without readout correction, tomography, or model-based mitigation.

{
We perform a rigorous statistical error analysis using one-sided empirical Bernstein (EB) bounds applied to the bounded setting-level averages, tying every confidence statement directly to bounded setting averages~\cite{maurer2009empirical}. 
The present superconducting implementation is not loophole-free and therefore does not constitute a full DI Bell test~\cite{weihs1998violation,hensen2015loophole,shalm2015strong,giustina2015significant}. It nevertheless demonstrates a Bell-certification strategy that is scalable and in principle portable across hardware platforms. 
}

\begin{figure*}[t]
    \centering
    \includegraphics[width=0.92\textwidth]{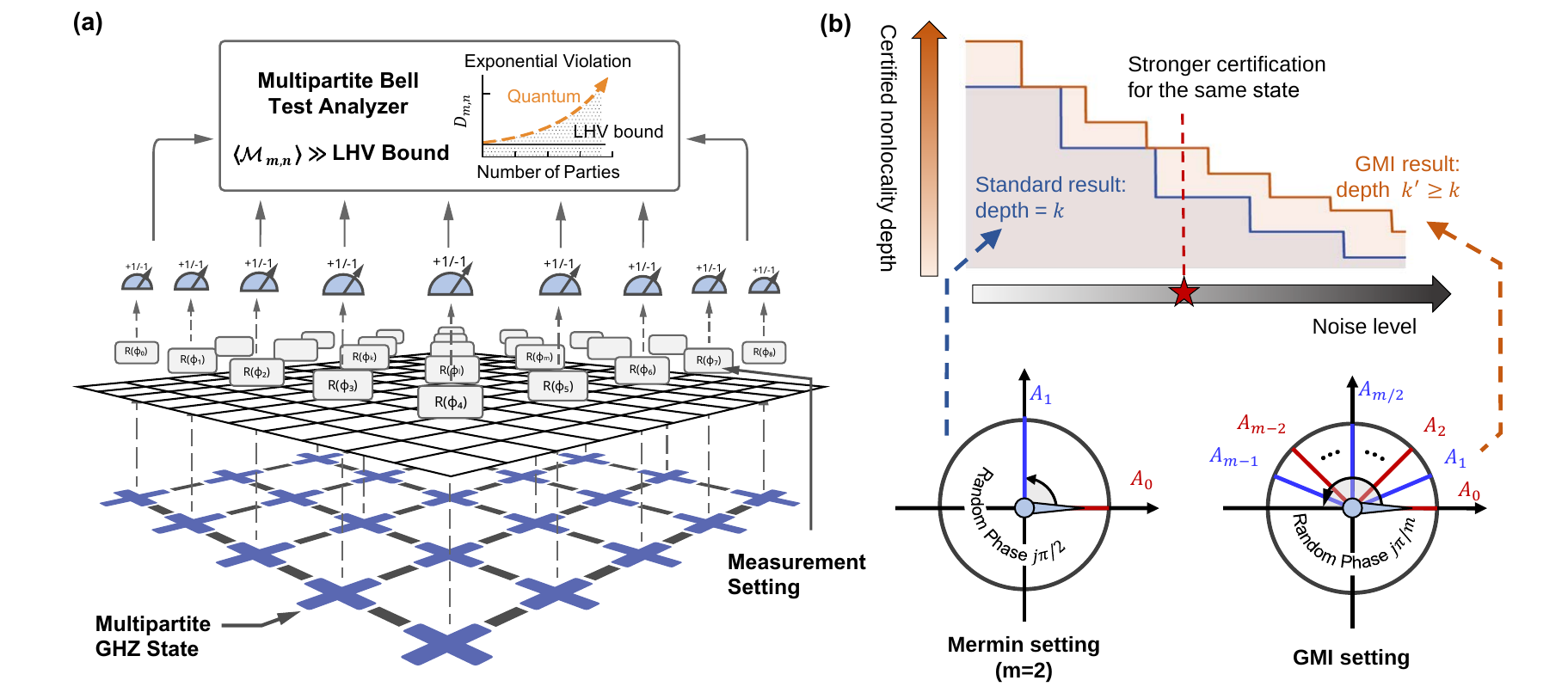}
    \caption{\textbf{Bell-test logic and measurement-space refinement.}
    (a) Local equatorial rotations $R(\phi_k)$ are applied to an $n$-qubit GHZ state to measure the generalized Mermin operator $\langle \mathcal{M}_{m,n} \rangle$. The inset shows exponential LHV violation with system size.
     (b) Increasing $m$ can tighten the relevant bounds and deepen certification for the same noisy GHZ state.
     }
    \label{fig:protocol}
\end{figure*}

\par\bigskip

\noindent\textit{Theory.} 
To certify the nonlocality of multipartite systems, we consider the $n$-qubit GHZ state, \mbox{$\ket{\text{GHZ}_n} = (\ket{0}^{\otimes n} + \ket{1}^{\otimes n})/\sqrt{2}$}. The standard Mermin inequality targets this state by combining its $X$- and $Y$-direction stabilizer observables~\cite{mermin1990extreme}. However, the underlying stabilizer group contains richer symmetries under local phase rotations. To fully exploit these correlations, we generalize this construction by introducing an arbitrary setting number $m \ge 2$, where local measurements are chosen from $m$ coplanar directions (Fig.~\ref{fig:protocol}b). Multi-setting GHZ Bell inequalities have a long history. Continuum and all-settings constructions, as well as arbitrary-setting variants in the line of Żukowski's geometrical approach, already showed that additional equatorial settings can improve the noise robustness of GHZ nonlocality~\cite{zukowski1993bell,nagata_bell_2006}.
However, large-scale experimental certification remains challenging, as it requires a finite-setting framework with analytical local and grouping-model bounds whose Bell values can be estimated without measuring exponentially many correlators.
Here we use this perspective in a finite-setting, experimentally sampled form. Our theoretical contribution is to derive and organize the local and grouping-model bounds for the normalized generalized Mermin operator used below, making the family directly applicable to Bell-ratio and nonlocality-depth certification at large $n$. We also use a biseparable quantum benchmark for the same normalized operator as a secondary GME-oriented benchmark. Specifically, our generalized Mermin operator is defined as:
\begin{equation}
    \MM_{m,n}
    =
    \frac1{m^{n-1}}\sum_{\substack{\bm{x}\in\{0,\ldots,m-1\}^n\\ s(\bm{x})\equiv 0 \,(\mathrm{mod}\, m)}}
    (-1)^{s(\bm{x})/m}
    \bigotimes_{j=1}^n A_{x_j}^{(j)},    \label{eq:gmi_main}
\end{equation}
with 
$s(\bm{x})=\sum_{j=1}^n x_j$.

To realize this Bell expression on GHZ states, we choose the local equatorial observables $A_x = \cos\!\left(\frac{\pi x}{m}\right)X+\sin\!\left(\frac{\pi x}{m}\right)Y,
    \, x\in\{0,\ldots,m-1\}$. 
For $m=2$, it reduces to the standard Mermin inequality up to normalization~\cite{mermin1990extreme}. 
For GHZ states, 
each allowed correlator satisfies $\langle\bigotimes_{j=1}^n A_{x_j}^{(j)}\rangle=(-1)^{s(\bm{x})/m}$, so every signed term in $\MM_{m,n}$ contributes $+1$ and the normalized quantum value reaches the algebraic maximum
\begin{equation}
    Q_{m,n} = \bra{\mathrm{GHZ}_n}\MM_{m,n}\ket{\mathrm{GHZ}_n}=1.
\end{equation}

The nontrivial question is how the classical benchmark scales with $m$. For powers of two, the local hidden-variable bound $C_{m,n}$ admits a closed form for even $n$, which covers the majority of the instances considered here (the corresponding odd-$n$ expression is given in the Supplementary Material),
\begin{equation}
    C_{m,n}
    =
    \frac{2}{m^n}
    \sum_{j=0}^{m/2-1}
    \frac{1}{\sin^{n}\!\left(\frac{(2j+1)\pi}{2m}\right)},
    \qquad
    m=2^r.
    \label{eq:local_bound_main}
\end{equation}
For large system sizes $n$, the bound decays exponentially as $C_{m,n}\propto\alpha_m^n$. Its exponential base $\alpha_m$ decreases with the number of measurement settings $m$, from $\alpha_2=1/\sqrt{2}\approx0.707$ to, for example, $\alpha_8\approx0.641$, thereby yielding improved noise-robustness scaling. This directly translates into a larger exponential scaling factor for the Bell-violation ratio:  $D^{\mathrm{id}}_{2,n}=1/C_{2,n}\propto1.414^n$ for the standard Mermin inequality, compared with $D^{\mathrm{id}}_{8,n}\propto1.561^n$ for $m=8$.




The same structure yields depth-sensitive classical benchmarks. If the $n$ parties are partitioned into groups of size at most $k$ and arbitrary classical communication is allowed only within each group, then the bound is
\begin{equation}
    \langle \MM_{m,n}\rangle_{k\text{-prod}}
    \le
    C_{m,\lceil n/k\rceil}.
    \label{eq:depth_main}
\end{equation}
Violating the right-hand side excludes all grouping-model explanations with maximal group size $k$ and certifies nonlocality depth at least $k+1$~\cite{bancal2009Quantifying,baccari2019bell,bernards2023bell}. We also compare selected data against the following biseparable quantum benchmark for the normalized generalized Mermin operator, following the multisetting extended-parity-game bound of P\'al and V\'ertesi
~\cite{pal2011multisetting},
\begin{equation}
    \langle \MM_{m,n}\rangle_{\mathrm{bisep}}
    \le
    \frac{1}{m\sin(\pi/2m)},
    \label{eq:bisep_main}
\end{equation}
which is useful as a smaller-scale entanglement benchmark.

To estimate $\langle\MM_{m,n}\rangle$ directly from measured correlators while avoiding exhaustive enumeration of its $m^{n-1}$ terms, which grow exponentially with $n$, we sample $N$ settings uniformly at random and acquire $M$ shots per setting. Averaging the $M$ shots for each setting yields a bounded random variable $S_i\in[-1,1]$, and the resulting estimator is $\hat{\MM}_{m,n}=\sum_{i=1}^{N} S_i/{N}$.
Because the sampled settings are randomized throughout the run, slow hardware drift is not systematically tied to any particular Mermin term.  For the concentration analysis, we treat the settings-level averages $S_i$ as independent bounded observations, while allowing their distributions and expectations to vary across the run. 
Accordingly, we take $N$, rather than the total shot count $K = NM$, as the statistical sample size. This is a conservative choice that does not rely on within-setting shot independence. 
Importantly, a large Bell violation does not inherently guarantee a high-confidence rejection of LHV models, as the statistical strength of the rejection remains heavily constrained by finite-sample fluctuations~\cite{van2005statistical, jungnitsch2010increasing}. 
To establish a rigorous mapping to strict statistical evidence, we compute all $p$-values at this settings level using the one-sided empirical Bernstein bound of Ref.~\cite{maurer2009empirical}. 
Here the $p$-value upper-bounds the probability that the relevant null model would produce an apparent violation at least as large as the observed one purely through finite-sampling fluctuations.
Writing $\hat{\mu}=\hat{\MM}_{m,n}$, $\mu:=\mathbb{E}[\hat{\mu}]=\frac{1}{N}\sum_{i=1}^{N}\mathbb{E}[S_i]$ (which equals $\mathbb{E}[S_i]$ when the settings-level variables are identically distributed under uniform random sampling), and $\hat{V}_N$ for the empirical variance of the $S_i$, the null hypothesis $H_0:\mu\le C$ is rejected with upper bound $p$-value. 
By saturating the empirical Bernstein inequality and solving analytically for $\delta$, one obtains the upper bound
\begin{equation}
    p_{\mathrm{EB}}
    =
    \min\{1,2\exp(-x_*^2)\},
    \label{eq:eb_main}
\end{equation}
where
\begin{equation}
    x_*
    = \frac{3(N-1)}{28}
    \left(
    \sqrt{
    \frac{2\hat{V}_N}{N}
    +
    \frac{56(\hat{\mu}-C)}{3(N-1)}
    }
    -
    \sqrt{\frac{2\hat{V}_N}{N}}
    \right)
    \label{eq:eb_xs_main}
\end{equation}
for $\hat{\mu}>C$, and $p_{\mathrm{EB}}=1$ otherwise, where $C$ is the relevant null-hypothesis threshold: the applicable local, biseparable, or $k$-producible bound. 

\par\bigskip

\noindent\textit{Experimental setup.}
To test the performance experimentally, we prepare GHZ states of up to 80 qubits on the \textit{Zuchongzhi 3.1} superconducting processor. The preparation proceeds by expanding entanglement outward from a central qubit using the native chip connectivity, as shown in \cref{fig:topology}(b). The optimization path for the entangling graph is determined by an automated optimization script, ensuring a shallow and hardware-aligned compilation, which is essential once the GHZ coherence is distributed across up to 80 qubits.

\begin{figure}[t]
    \centering
    \includegraphics[width=\columnwidth]{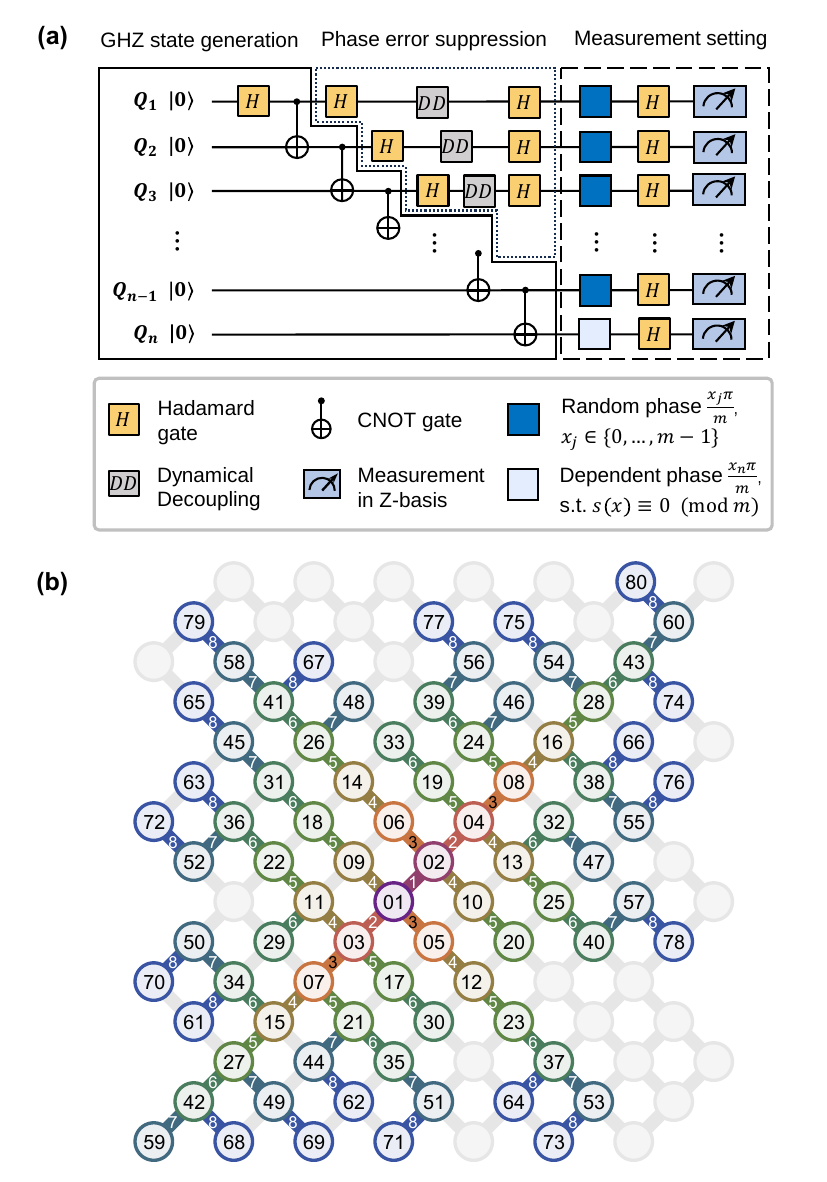}
    \caption{\textbf{GHZ preparation and hardware embedding.}
    (a) Experimental sequence for GHZ-state generation (solid), phase-error suppression (dotted), and setting-dependent measurement of \cref{eq:gmi_main} (dashed).
    (b) The accompanying chip map shows the active qubits and compiled depth for the $80$-qubit GHZ instance on the superconducting processor. The numbers on the links denote the sequence steps of entangling gates for generating the GHZ state, illustrating how the entangled region is built up step by step.
    }
    \label{fig:topology}
\end{figure}

We perform Bell certification on our processor using the protocol illustrated in \cref{fig:topology}(a). After state generation we insert a phase-protection window before the measurement rotations. The core GHZ coherence lives in the global $X^{\otimes n}$ component, so we use a Hadamard window together with dynamical-decoupling (DD) pulses to suppress phase noise during the idle interval and return the state to its original frame before the equatorial analysis~\cite{viola1998dynamical,bylander2011noise,bao2024creating}. The measurement stage applies the equatorial rotations corresponding to the number of settings $m$ and then measures in the computational basis.

In all direct comparisons across $m$, the GHZ preparation circuit, qubit subset, dynamical-decoupling sequence, and total sampling budget $K=NM$ were kept fixed; only the final equatorial analysis layer depended on $m$.

For each sampled Mermin correlator we collect $M=1500$ shots on GHZ states, and we randomize the setting order throughout the run. This randomized protocol is deliberately agnostic to any calibrated noise model: no readout correction, state tomography, or model-based mitigation enters the Bell estimates or their statistical interpretation.

\par\bigskip

\noindent\textit{Experimental results.}
We first compare the standard Mermin test ($m=2$) and a generalized Mermin test with $m=8$ as the system size increases, as shown in Fig.\ref{fig:size_scaling}. The measured generalized Mermin expectation values decay with $n$, as expected for noisy GHZ states, yet remain remarkably close for both families, whereas their experimental Bell ratios differ markedly due to the lower classical threshold of the generalized test.

\begin{figure}[t]
    \centering
    \includegraphics[width=\columnwidth]{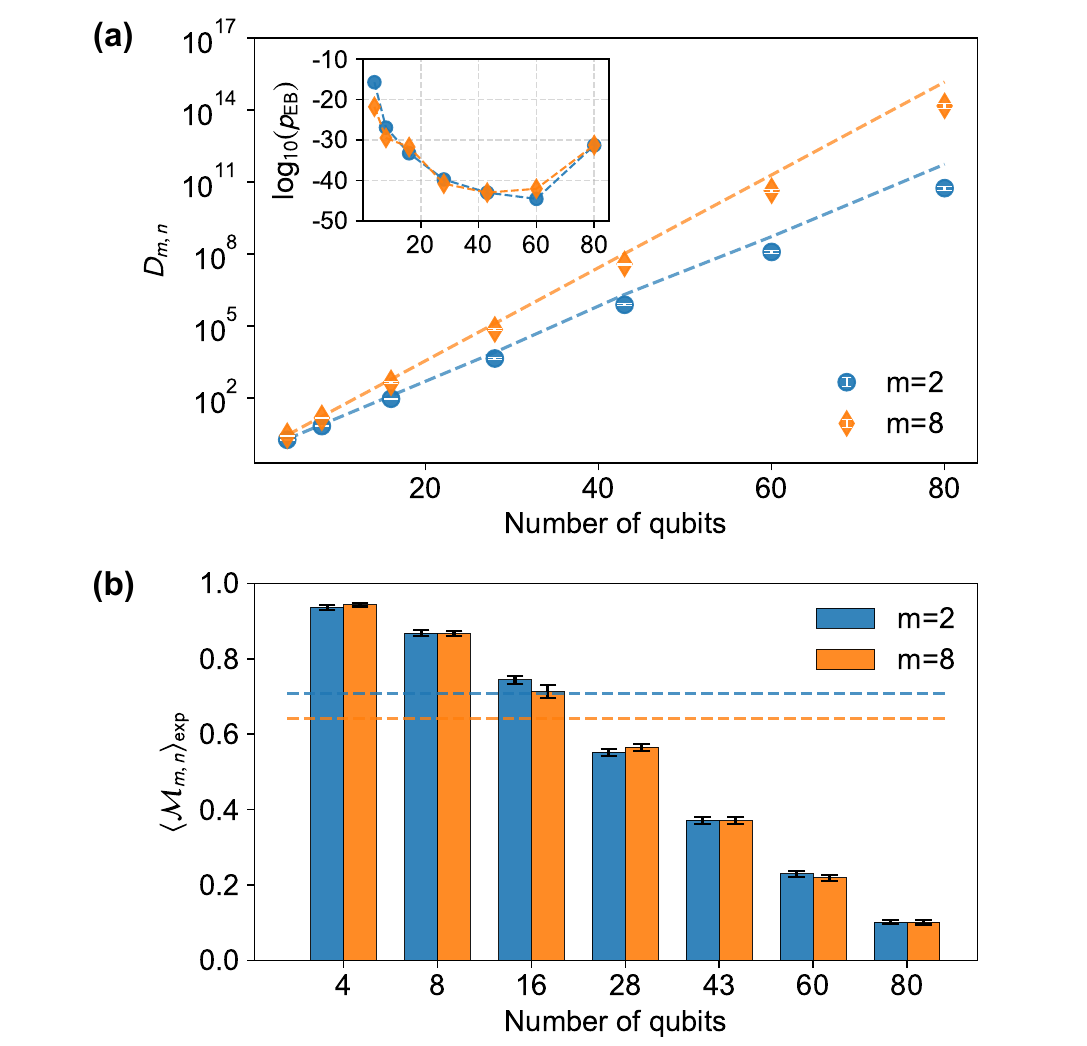}
    \caption{\textbf{Large-scale Bell violation and scaling performance.} 
    (a) 
    Experimental Bell ratios $D^{\mathrm{exp}}_{m,n}$ for the standard Mermin test ($m=2$, blue circles) and the generalized Mermin test ($m=8$, orange diamonds) up to $n=80$. Dashed lines show the corresponding ideal scalings $D^{\mathrm{id}}_{m,n}=1/C_{m,n}$. Error bars denote $40\times$ the standard error and are enlarged for visibility because the unscaled errors are smaller than the symbols. 
    The inset shows setting-level empirical-Bernstein upper bounds against the LHV model as $\log_{10}(p_{\mathrm{EB}})$, with lower values indicating stronger rejection. Color coding follows the main panel.
    (b) Measured expectation values $\langle\MM_{m,n}\rangle_{\mathrm{exp}}$; colored dashed lines mark the biseparable bounds. Error bars denote $10\times$ the standard error.
    }
    \label{fig:size_scaling}
\end{figure}

This separation between measured correlations and classical threshold is what makes the generalized family experimentally useful. For the largest system, $n=80$, the measured Bell-operator expectation values remain near $0.1$ for both setting families and 
the settings-level EB test still gives small $p_{\mathrm{EB}}$ upper bounds for both $m=2$ and $m=8$ 
(Supplementary Tables S1 and S2). 
The enhanced $m=8$ Bell ratio demonstrates the generalized test's certification advantage: tighter classical bounds yield a stronger ratio from essentially unchanged correlations, rather than from improved state preparation.

The advantage of the generalized family appears mainly in the lower classical thresholds and the resulting improvements in Bell ratio and depth certification, rather than in the local-rejection $p$-value scale at fixed $N$. 
At fixed $N$, the EB $p$-value upper bounds are controlled mainly by the observed settings-level mean and variance, which are similar for $m=2$ and $m=8$. The gain from the generalized family appears instead in the lower classical thresholds, and therefore in the Bell ratios and depth certification.

The direct comparison also reveals a clear improvement in scaling. Using the measured GHZ data, we obtain the following experimental Bell-ratio scaling:
\begin{equation}
    D^{\mathrm{exp}}_{8,n}
    =
    \frac{\langle \MM_{8,n}\rangle_{\mathrm{exp}}}{C_{8,n}}
    \propto
    (1.5168\pm0.0024)^n,
\end{equation}
which substantially exceeds the standard two-setting scaling. Increasing the local measurement complexity therefore produces a larger exponential Bell-ratio scaling factor for the same noisy GHZ state.

A cleaner isolation of the generalized Mermin advantage is obtained by fixing the state size and varying only the Bell-inequality choice. For the $80$-qubit GHZ state, \cref{fig:depth_scaling}(a) shows that the experimental Bell ratio $D^{\mathrm{exp}}_{m,80}$ increases monotonically as $m$ is raised from $2$ to $4,8,16$, and $32$. Once $m>2$, the observed Bell ratio already exceeds the theoretical ceiling of the standard Mermin inequality. At the same time, the expectation value changes only weakly with $m$. This behavior is consistent with the mechanism of the construction: larger $m$ yields a larger quantum-to-classical separation for essentially the same experimental correlator.

At the many-body level, the clearest consequence of the generalized family appears in the depth hierarchy. Applying the grouping-model bound \cref{eq:depth_main} to the measured data yields the nonlocality depths shown in \cref{fig:depth_scaling}(b). The generalized test outperforms the standard Mermin test once decoherence becomes appreciable: at $n=60$ we certify depth $12$ for $m=8$ versus depth $10$ for $m=2$, and at $n=80$ the advantage grows to depth $14$ versus depth $10$. Thus the generalized family rules out substantially larger classical groupings on the same physical GHZ states. Statistical confidence for each depth claim is assigned by testing the corresponding $(k-1)$-producible null hypothesis using the one-sided empirical Bernstein bound at the settings level; the associated $p_{\mathrm{EB}}$ values are summarized in Table S4 of the Supplementary Material.
{The non-monotonic depth trend arises from a staircase competition between discrete grouping-model thresholds and decoherence-driven correlator decay, yielding temporary increases at intermediate $n$ and plateaus or declines at larger $n$.}

\begin{figure}[t]
    \centering
    \includegraphics[width=\columnwidth]{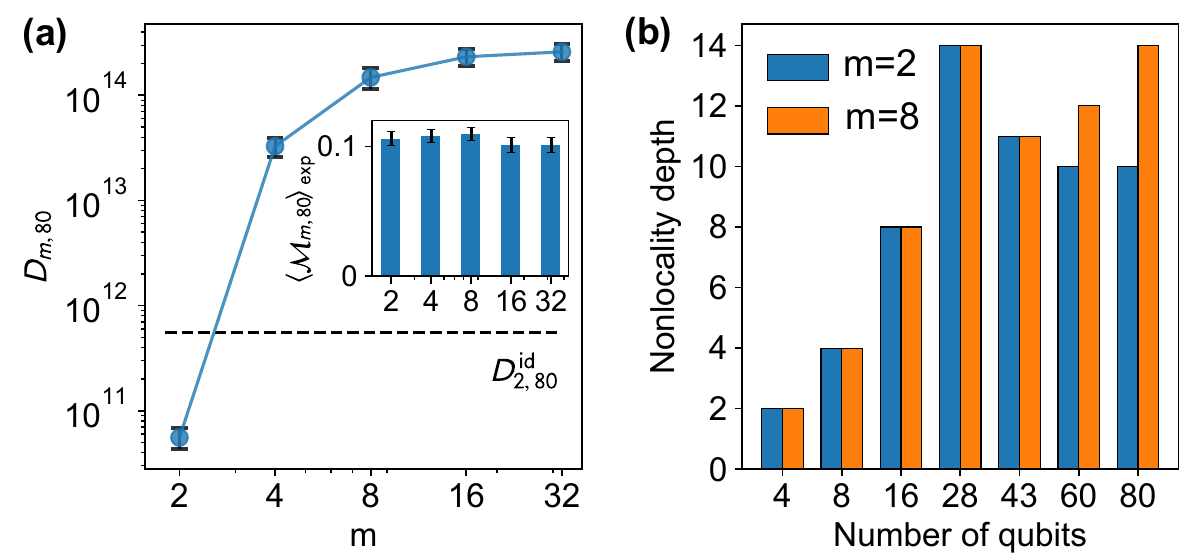}
    \caption{\textbf{Generalized Mermin advantage at fixed size and nonlocality depth certification.}
    (a) Experimental Bell ratio for the $80$-qubit GHZ state versus the number of settings $m$. The horizontal dashed line marks the ideal standard-Mermin value $D^{\mathrm{id}}_{2,80}$. The inset shows the corresponding expectation values. Error bars denote $40\times$ and $10\times$ the standard error in the main panel and inset, respectively.
    (b) Certified nonlocality depth versus system size for $m=2$ and $m=8$. 
    }
    \label{fig:depth_scaling}
\end{figure}

At smaller scale we also compare against the biseparable quantum benchmark \cref{eq:bisep_main}. For $n=16$ we obtain
$
    \langle \MM_{2,16}\rangle_{\mathrm{exp}}=0.7446
    >
    {1}/{\sqrt{2}}$
for the standard Mermin test and
$
    \langle \MM_{8,16}\rangle_{\mathrm{exp}}=0.7128
    >
    \frac{1}{8\sin(\pi/16)}\approx0.6407$
for the generalized test. 
The generalized setting almost doubles the gap to the relevant biseparable threshold. We therefore treat the GME comparison as a secondary but informative benchmark. It is not intended as a loophole-free or fully device-independent GME claim. The same generalized Bell design that improves nonlocality depth also makes the biseparable comparison more favorable at moderate size.

\par\bigskip

\noindent\textit{Discussion.}
{
We have shown that, in the noisy macroscopic regime, the choice of Bell functional is central to the certification strength. The standard Mermin inequality establishes the canonical exponential Bell scaling for GHZ states, 
while generalized Mermin inequalities show that increasing the measurement-setting number $m$ strengthens the certification. 
For the same noisy GHZ state and comparable measured Bell expectation values, larger $m$ primarily improves certification by tightening the relevant classical bounds. This yields stronger Bell ratios, sharper fixed-size tests, and deeper nonlocality-depth certification.
}


Experimentally, we prepared GHZ states of up to $80$ qubits on the \textit{Zuchongzhi 3.1} processor~\cite{jiang2026one} and observed Bell-violation ratios that continue to grow exponentially with system size in this large-$n$ regime. Compared with earlier GHZ-state Bell experiments, our results substantially extend the scale of direct GHZ-state Bell certification on superconducting hardware. Just as importantly, the reported Bell estimates and confidence statements rely only on directly measured correlators and analytical bounds, with no readout correction, tomography, or model-based mitigation~\cite{bonsel2025generating,vsupic2020self}. 

Our results suggest a simple design principle for macroscopic Bell tests in the NISQ era: for a fixed, noisy many-body state, certification strength depends on both the Bell functional and the state size. 
For GHZ states, higher-setting generalized Mermin tests make this concrete by strengthening certification on noisy hardware.

{
Finally, the present superconducting experiment is not loophole-free; a brief discussion of the relevant loopholes is provided in the Supplementary Material. Accordingly, we do not present it as a fully device-independent Bell test on chip. The experiment is therefore best interpreted as a correlation-only Bell benchmark based on observed correlators and analytical bounds. Generalized Mermin inequalities thus provide a sharper route to large-scale Bell certification, and they remain compatible with future Bell architectures that may also close the standard loopholes.
}

\begin{acknowledgments}

\textbf{Funding:}
This research was supported by the Quantum Science and Technology-National Science and Technology Major Project (Grant No.~2023ZD0300200 and No.~2021ZD0300200), Anhui Initiative in Quantum Information Technologies, the Special funds from Jinan Science and Technology Bureau and Jinan high tech Zone Management Committee, Deutsche Forschungsgemeinschaft (DFG, German Research Foundation, project number 563437167), the Sino-German Center for Research Promotion (Project M-0294), and the German Federal Ministry of Research, Technology and Space (Project QuKuK, Grant 
No.\ 16KIS1618K and Project BeRyQC, Grant No.\ 13N17292).

\end{acknowledgments}

\bibliographystyle{apsrev4-2}
%

\clearpage
\onecolumngrid

\makeatother
\let\author\revtexauthor
\let\affiliation\revtexaffiliation
\let\email\revtexemail
\let\thanks\revtexthanks
\setcounter{affil}{0}
\setcounter{mpfootnote}{0}
\makeatletter
\global\let\@FMN@list\@empty
\def\@mpfn{mpfootnote}
\makeatother
\title{Supplemental Material for\\
``Generalized Mermin Inequalities for Benchmarking Large-Scale GHZ States''}
\author{Jianbin Cai}
\thanks{These authors contributed equally to this work.}
\HFRC
\SHRC
\HFNL

\author{Junxiang Huang}
\thanks{These authors contributed equally to this work.}
\PKUCFCS
\PKUSCS

\author{Fynn Otto}
\thanks{These authors contributed equally to this work.}
\SIEGEN

\author{Yuan Li}
\thanks{These authors contributed equally to this work.}
\HFRC
\SHRC
\HFNL

\author{Carlos de Gois}
\SIEGEN
\INRIA

\author{Tao Jiang}
\HFRC
\SHRC
\HFNL

\author{Sirui Cao}
\HFRC
\SHRC
\HFNL

\author{Fangzheng Chen}
\LINKELAB

\author{Hao Fu}
\LINKELAB

\author{Jin Lin}
\HFRC
\SHRC
\HFNL

\author{Wei Xie}
\LINKELAB

\author{Naibin Zhou}
\HFRC
\SHRC
\HFNL

\author{Shibiao Tang}
\QCTek

\author{Xiang-Yang Li}
\HFNL
\LINKELAB

\author{Cheng-Zhi Peng}
\HFRC
\SHRC
\HFNL

\author{Xiao Yuan}
\PKUCFCS
\PKUSCS

\author{Otfried G\"uhne}
\email{otfried.guehne@uni-siegen.de}
\SIEGEN

\author{Ming Gong}
\email{minggong@ustc.edu.cn}
\HFRC
\SHRC
\HFNL

\revtexmaketitle
\onecolumngrid

\setcounter{section}{0}
\setcounter{subsection}{0}
\setcounter{subsubsection}{0}
\setcounter{equation}{0}
\setcounter{figure}{0}
\setcounter{table}{0}
\setcounter{secnumdepth}{3}
\renewcommand{\thesection}{S\Roman{section}}
\renewcommand{\theequation}{S\arabic{equation}}
\renewcommand{\thefigure}{S\arabic{figure}}
\renewcommand{\thetable}{S\arabic{table}}
\renewcommand{\theHsection}{supp.section.\arabic{section}}
\renewcommand{\theHequation}{supp.equation.\arabic{equation}}
\renewcommand{\theHfigure}{supp.figure.\arabic{figure}}
\renewcommand{\theHtable}{supp.table.\arabic{table}}
\setlength{\parskip}{0pt}
\setlength{\parindent}{10pt}
\renewcommand{\baselinestretch}{1.1}\normalsize
\vspace{1em}
\noindent\rule{\textwidth}{0.4pt}

\tableofcontents

\vspace{1em}
\noindent\rule{\textwidth}{0.4pt}

\renewcommand{\thefigure}{S\arabic{figure}}
\renewcommand{\thetable}{S\arabic{table}}

\section{Experimental Platform and GHZ Diagnostics}
\label{sec:si_platform}

\subsection{Processor Characterization}

    The experiments in this work were performed on the \textit{Zuchongzhi 3.1} superconducting quantum processor~\cite{sm:jiang2026one}. Its architecture features an array of 105 functional transmon qubits interconnected by 185 tunable couplers and 15 readout lines. More detailed design and comprehensive characterization were previously reported in Ref.~\cite{sm:jiang2026one}. In Fig.~\ref{fig:gr_errors}, we present the representative error rates for the device used in this experiment, including single-qubit (SQ) and two-qubit (CZ) gate errors characterized by cross-entropy benchmarking (XEB), as well as readout assignment errors ($1-f_{00}$ and $1-f_{11}$), where $f_{00}$ ($f_{11}$) denotes the probability of correctly assigning the qubit state $\ket{0}$ ($\ket{1}$) during readout. 

    \begin{figure}[h]
        \begin{center}
        \includegraphics[width=0.55\textwidth]{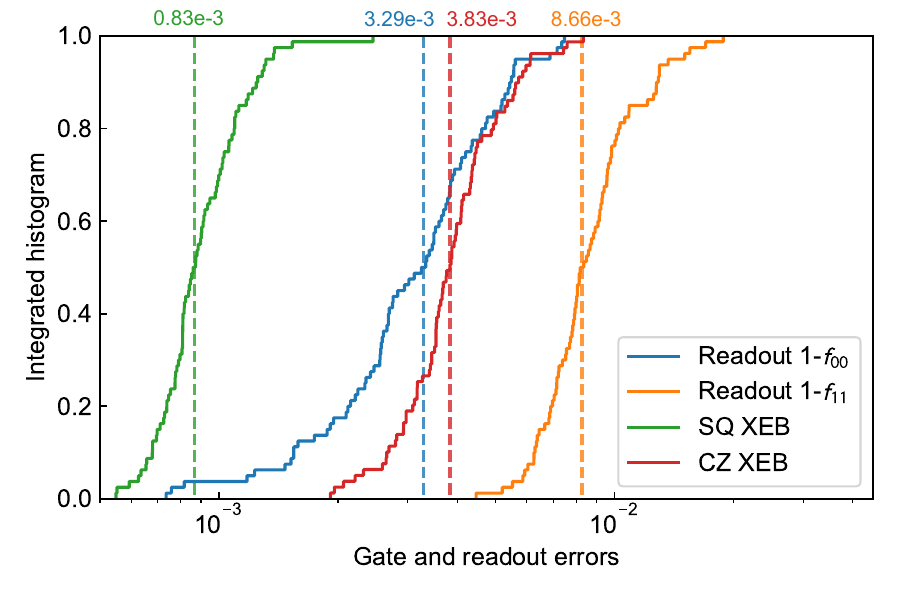}
        \end{center}
        \caption{Representative gate and readout characterization of the processor.
        The cumulative distributions summarize single-qubit gate errors, two-qubit CZ-gate errors, and readout assignment errors.
        These calibration data are reported only as hardware context; they do not enter any Bell estimate or statistical hypothesis test. 
        }
        \label{fig:gr_errors}
    \end{figure}

    \subsection{GHZ State Benchmarking}
        To characterize the quality of the prepared $n$-qubit Greenberger-Horne-Zeilinger (GHZ) states, we employ a randomized stabilizer sampling protocol to visualize how different noise channels affect GHZ correlators. The density matrix of an ideal GHZ state can be expressed via its stabilizer group expansion,
        \begin{equation}
            \rho_{GHZ} = \frac{1}{2^n} \sum_{S \in \mathcal{S}} S = \frac{1}{2^n} \prod_{k \in V} (I + S_k),
        \end{equation}
        where $V$ represents the set of all vertices in the graph. The stabilizer generators $\{S_k\}$ for an $n$-qubit GHZ state are defined as:
        \begin{equation}
            S_1 = \bigotimes_{j=1}^n X_j, \quad S_k = Z_1 Z_k \text{ for } k \in \{2, \dots, n\}.
        \end{equation}
        Here qubit $1$ remains the root qubit in the star-graph convention for the $Z$-type generators.
        
        Due to the exponential growth of the stabilizer group ($2^n$), an exhaustive measurement of every expectation value is computationally intractable for large $n$. We instead utilize a randomized fidelity estimation method~\cite{sm:flammia2011direct}: for each circuit iteration, we randomly sample a Pauli string $P = \prod_{k=1}^n P_k$, where each $P_k$ is drawn from $\{I, S_k\}$, and evaluate its expectation value. These stabilizers fall into two categories according to the setting of the first qubit, $P_1$:
        \begin{itemize}
            \item \textbf{Population ($Z$-type) Stabilizers ($P_1 = I$):} These terms involve only $Z$ and $I$ operators, corresponding to the populations (diagonal elements) of the density matrix.
            \item \textbf{Coherence ($X/Y$-type) Stabilizers ($P_1 = S_1$):} These involve measurements in the $XY$-plane, corresponding to the multi-qubit coherences (off-diagonal elements), and these measurements are also
            central in the standard Mermin inequality.
        \end{itemize}
        We define the total state fidelity of the $n$-qubit GHZ state as the arithmetic mean of the population and coherence contributions. This split is diagnostically useful: the former tracks diagonal populations (resilience to bit-flips), while the latter probes the fragile off-diagonal coherences essential for multi-body entanglement. 
        
       Experimental results for system sizes ranging from $n=2$ to $n=80$ show that the population term consistently outperforms the coherence term. As illustrated in Fig.~\ref{fig:ghz_fidelity}(a), both components decay with increasing qubit number, but the coherence term (orange dashed line) exhibits a more pronounced decline. For the $n=80$ case, the distribution of 4,600 sampled stabilizer measurements resolves into two clearly separated Gaussian peaks, as shown in Fig.~\ref{fig:ghz_fidelity}(b). Each measurement setting is repeated $5000$ times to ensure sufficient statistics. The diagonal population contribution yields a mean value of $\mu = 0.3646$ with a standard deviation of $\sigma = 0.0384$, while the off-diagonal coherence peak is centered at a significantly lower value of $\mu = 0.1012$ with $\sigma = 0.0335$. 
       
       The separation between the two components reflects their different sensitivities to experimental noise. Population stabilizers probe computational-basis populations and are primarily affected by population errors and state-assignment errors. Coherence stabilizers probe the global phase coherence of the GHZ superposition and are additionally sensitive to accumulated dephasing and phase drift across the full register. Although additional basis rotations are required to measure the coherence stabilizers, the principal physical distinction is their sensitivity to fragile many-body coherence. 
       As the system size $n$ increases, the cumulative effect of these phase fluctuations leads to an exponential decay in coherence and a widening gap between the two, as illustrated in Fig.~\ref{fig:ghz_fidelity}(a).
       
        \begin{figure}[h]
            \centering
            \includegraphics[width=\textwidth]{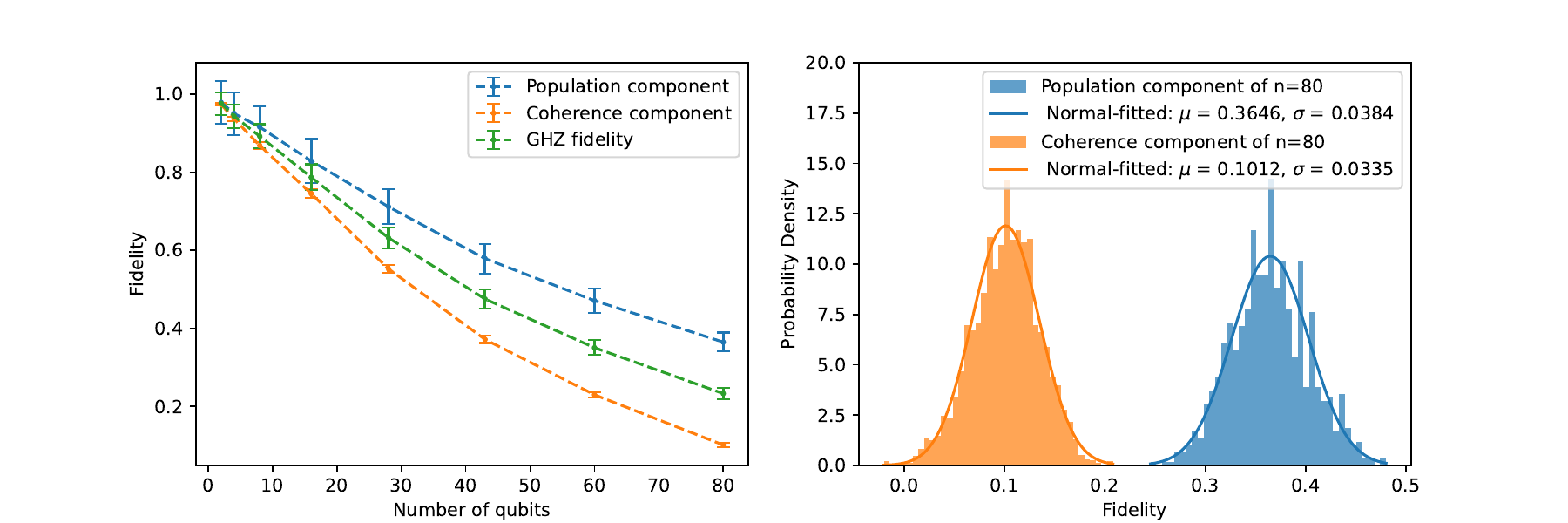}
            \caption{GHZ state fidelity characterization and statistical distribution. 
            (a) Measured population component (blue), coherence component (orange), and resulting GHZ-state fidelity (green) as functions of the system size $n$. The GHZ-state fidelity is calculated as the arithmetic mean of the population and coherence components. Error bars denote $10\times$ standard error.
            (b) Probability-density histogram of 4,600 randomized stabilizer measurements for the $n=80$ GHZ state. The data clearly resolve into a bimodal Gaussian structure, with the right peak representing diagonal populations ($\mu = 0.3646$) and the left peak representing off-diagonal coherences ($\mu = 0.1012$). Solid lines indicate Gaussian fits to the experimental data. 
            }
            \label{fig:ghz_fidelity}
        \end{figure}
    
       All values reported in this section are unmitigated stabilizer-based fidelity estimates obtained directly from the measured correlators, without readout-error correction or other model-based error mitigation. They therefore characterize the end-to-end prepared-and-measured GHZ correlations and include contributions from both state-preparation and measurement errors. Readout-error mitigation could yield larger inferred fidelities, but such estimates would depend on an additional calibration model and are not used here.

    \subsection{Spatial noise correlations and crosstalk}
        To assess potential classical explanations of the observed Bell violations (e.g., spatially correlated errors and measurement crosstalk),
        we empirically quantify pairwise correlations of flip-error events across the processor during both state-preparation and readout.
        These diagnostics do not enter the Bell estimates or statistical tests,
        but they provide supporting evidence that the observed multi-body Bell correlations are unlikely to be artifacts of classical correlated noise.
        We quantify these correlations by evaluating the covariance of flip error events $E_i$ and $E_j$ for any qubit pair ($q_i, q_j$)~\cite{sm:cao2023generation}, defined as
        \begin{equation}
            {\mathbf{cov}}[E_i, E_j] = \mathbb{E}[E_i \wedge E_j] - \mathbb{E}[E_i]\mathbb{E}[E_j], 
            \label{eq:appendix_cov}
        \end{equation}
        where $\mathbb{E}[E_i \wedge E_j]$ represents the joint probability of simultaneous flip errors. Using a randomized preparation and measurement scheme across the 80-qubit array, we characterized the statistical distribution of joint errors and covariances to generate cumulative distribution functions (CDFs) for various error types. As illustrated in Fig.~\ref{fig:corralation_m}, the median of measured covariances for all qubit pairs is predominantly clustered around $10^{-6}$ or less, with the median covariance remaining at least one order of magnitude smaller than the joint measurement error. These data indicate that spatial correlations during both state preparation and readout are weak on our device. The suppressed pairwise covariances therefore support the view that the measured Bell correlators cannot be straightforwardly attributed to simple classical noise correlations distributed across the chip.
    
        \begin{figure}[h]
            \centering
            \includegraphics[width=\textwidth]{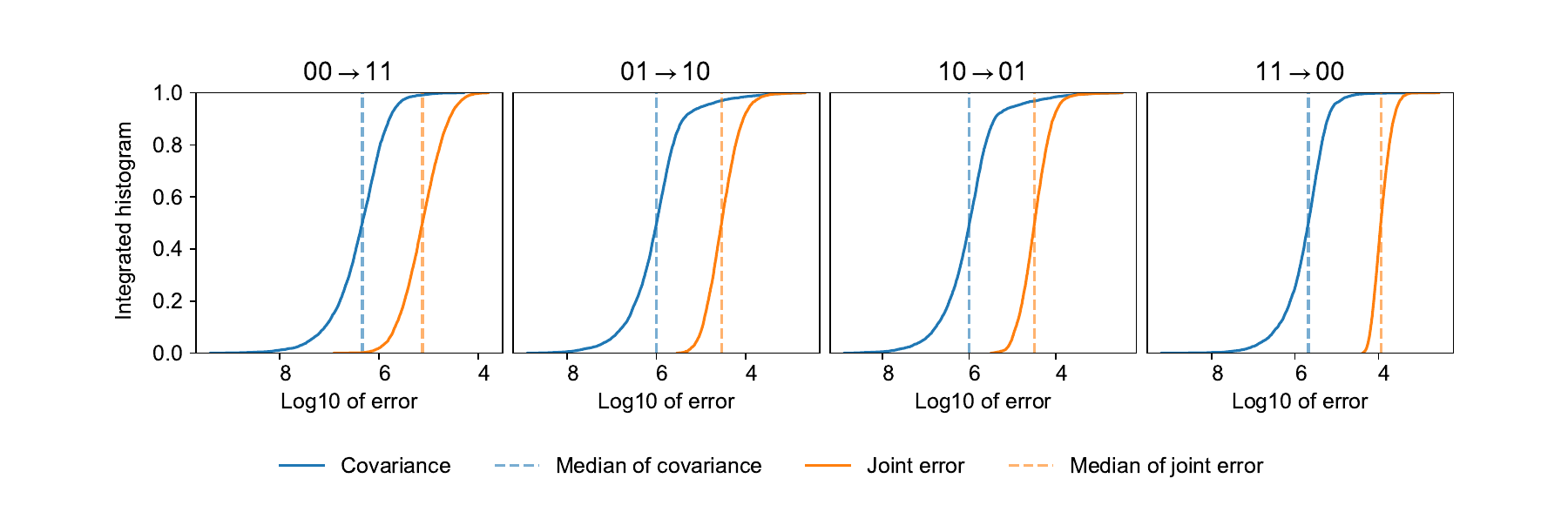}
            \caption{\textbf{Distributions of covariance and joint errors for the qubits forming the 80-qubit GHZ state.} 
            Cumulative distribution functions (CDFs) of the correlated noise, presented as integrated histograms. 
            The notation $jk \rightarrow \bar{j}\bar{k}$ denotes joint flip events, where an initial state $|j, k\rangle$ is measured as the bit-flipped state $|\bar{j}, \bar{k}\rangle$.
            }
            \label{fig:corralation_m}
        \end{figure}

\section{Settings-Level Empirical Bernstein Analysis}
\label{sec:si_statistics}

\subsection{Estimator}
For the normalized generalized Mermin operator
\begin{equation}
    \MM_{m,n}
    =
    \frac{1}{m^{n-1}}
    \sum_{\substack{\bm{x}\in\{0,\ldots,m-1\}^n\\ s(\bm{x})\equiv 0\ (\mathrm{mod}\ m)}}
    c_{\bm{x}} O_{\bm{x}},
\end{equation}
with
\begin{equation}
    c_{\bm{x}}=(-1)^{s(\bm{x})/m},
    \qquad
    O_{\bm{x}}=\bigotimes_{j=1}^{n}A^{(j)}_{x_j},
\end{equation}
we sample $N$ allowed settings $\bm{x}_i$ uniformly at random from the $m^{n-1}$ correlator terms and measure each sampled setting for $M$ shots.
If $y_{i,r}\in\{-1,+1\}$ denotes the $r$th shot outcome of the signed observable $c_{\bm{x}_i} O_{\bm{x}_i}$, we define the settings-level random variable
\begin{equation}
    S_i = \frac{1}{M}\sum_{r=1}^{M} y_{i,r}\in[-1,1].
\end{equation}
The estimator used throughout the paper is then
\begin{equation}
    \widehat{\MM}_{m,n}
    =
    \frac{1}{N}\sum_{i=1}^{N} S_i.
    \label{eq:si_estimator}
\end{equation}
The total number of physical shots is $K=NM$, but $K$ is used only as an experimental resource count.
Every statistical statement is formulated at the settings level with sample size $N$.
Because the sampled settings were randomized throughout the run, slow hardware drift is not systematically tied to any particular Bell term.
For the concentration analysis, we treat the settings-level observations $S_i$ as independent bounded variables, while allowing their distributions and expectations to vary across the run.
We therefore use the settings-level averages $S_i$ as the elementary bounded observations and take $N$, rather than the total shot count $K$, as the statistical sample size.
This is a conservative choice that does not rely on within-setting shot independence.

\subsection{Empirical Bernstein \texorpdfstring{$p$}{p}-value}
Let $C$ denote the relevant null-hypothesis threshold:
it can be the local bound, the biseparable bound, or a $k$-producible grouping-model bound.
Defining the empirical mean and empirical variance by
\begin{equation}
    \hat{\mu}=\frac{1}{N}\sum_{i=1}^{N} S_i,
    \qquad
    \hat{V}_N=\frac{1}{N-1}\sum_{i=1}^{N}(S_i-\hat{\mu})^2,
\end{equation}
and the population mean by
\begin{equation}
    \mu=\mathbb{E}[\hat{\mu}]=\frac{1}{N}\sum_{i=1}^{N}\mathbb{E}[S_i],
\end{equation}
we test the null hypothesis $H_0:\mu\le C$, where $\mu$ equals $\mathbb{E}[S_i]$ when the settings-level variables are identically distributed under uniform random sampling.
Theorem~11 of Ref.~\cite{sm:maurer2009empirical} yields the one-sided lower confidence bound
\begin{equation}
    \mu
    \ge
    \hat{\mu}
    -
    \sqrt{\frac{2\hat{V}_N\ln(2/\delta)}{N}}
    -
    \frac{14\ln(2/\delta)}{3(N-1)}
\end{equation}
with probability at least $1-\delta$.
Inverting this inequality gives the empirical-Bernstein $p$-value upper bound
\begin{equation}
    p_{\mathrm{EB}}
    =
    \inf\left\{\delta\in(0,1):
    \hat{\mu}
    -
    \sqrt{\frac{2\hat{V}_N\ln(2/\delta)}{N}}
    -
    \frac{14\ln(2/\delta)}{3(N-1)}
    \ge C
    \right\}.
    \label{eq:si_eb}
\end{equation}
Solving the boundary condition analytically yields the closed-form expression 
\begin{equation}
    p_{\mathrm{EB}} = 
    \min\left\{1,\,2\exp\left[
    -\left(\frac{3(N-1)}{28}\right)^2
    \left(
    \sqrt{
    \frac{2\hat{V}_N}{N}
    +
    \frac{56(\hat{\mu}-C)}{3(N-1)}
    }
    -
    \sqrt{\frac{2\hat{V}_N}{N}}
    \right)^2
    \right]\right\},
    \label{eq:si_eb_cf}
\end{equation}
for $\hat{\mu}>C$, and $p_{\mathrm{EB}}=1$ otherwise.
Since the $p$-values span multiple orders of magnitude in our experiments, we report the empirical-Bernstein upper bounds on a base-10 logarithmic scale, $\log_{10}(p_{\mathrm{EB}})$, 
for presentation.
No concentration inequality based on total-shot counts is used anywhere in the paper.

\subsection{Bell data summaries}
Tables~\ref{tab:ghz_m2} and~\ref{tab:ghz_m8} collect the settings-level EB analysis for the two Bell families emphasized in the main text: the standard Mermin test with $m=2$ and the generalized Mermin test with $m=8$.
The tables make three points explicit:
(i) the expectation values of the correlators decrease with system size but remain well resolved and significantly above zero even at $80$ qubits,
(ii) the local threshold falls exponentially faster, making large-scale Bell violation possible,
and (iii) the settings-level EB $p$-value upper bounds remain comfortably below conventional significance thresholds across the reported GHZ sizes.
Throughout these summaries, the reported Bell estimates and statistical tests use sampled GHZ correlators and analytical Bell-type bounds, without readout correction, state tomography, or a calibrated noise model. The empirical Bernstein analysis uses the settings-level sample size $N$, not the total number of physical shots $K=NM$.

\begin{table}[ht]
\centering
\begin{tabular}{lccccccc}
\toprule
$n$ & 80 & 60 & 43 & 28 & 16 & 8 & 4 \\
\midrule
$M$ & 1500 & 1500 & 1500 & 1500 & 1500 & 1500 & 1500 \\
$N$ & 3600 & 2200 & 1300 & 800 & 500 & 400 & 400 \\
$\langle \MM_{2,n}\rangle_{\exp}$ & 0.101 & 0.230 & 0.371 & 0.552 & 0.745 & 0.868 & 0.936 \\
$\log_{10}C_{2,n}$ & -11.740 & -8.730 & -6.322 & -3.913 & -2.107 & -0.903 & -0.301 \\
$\log_{10}(p_{\mathrm{EB}})$ & -31.329 & -44.587 & -43.046 & -39.781 & -33.306 & -26.944 & -15.711 \\
\bottomrule
\end{tabular}
\caption{Standard Mermin Bell data for GHZ states. 
The empirical-Bernstein $p$-value upper bounds are computed using the number of sampled settings $N$ as the sample size and reported as $\log_{10}(p_{\mathrm{EB}})$.
}
\label{tab:ghz_m2}
\end{table}

\begin{table}[ht]
\centering
\begin{tabular}{lccccccc}
\toprule
$n$ & 80 & 60 & 43 & 28 & 16 & 8 & 4 \\
\midrule
$M$ & 1500 & 1500 & 1500 & 1500 & 1500 & 1500 & 1500 \\
$N$ & 3600 & 2200 & 1300 & 800 & 500 & 400 & 400 \\
$\langle \MM_{8,n}\rangle_{\exp}$ & 0.101 & 0.219 & 0.371 & 0.565 & 0.713 & 0.867 & 0.944 \\
$\log_{10}C_{8,n}$ & -15.165 & -11.298 & -8.020 & -5.112 & -2.792 & -1.245 & -0.464 \\
$\log_{10}(p_{\mathrm{EB}})$ & -31.276 & -42.086 & -43.033 & -40.835 & -31.752 & -29.506 & -21.709 \\
\bottomrule
\end{tabular}
\caption{Generalized Mermin Bell data for GHZ states with $m=8$. The observed correlators are close to the $m=2$ case, while the lower classical threshold produces a stronger experimental Bell ratio.}
\label{tab:ghz_m8}
\end{table}

\section{Theory of Generalized Mermin Inequalities}
\label{sec:si_gmi_theory}

\vspace{1em}

\noindent
In this appendix we investigate the local hidden-variable bounds (also called local or classical bounds), and the nonlocality-depth bounds for the generalized Mermin expressions.
As we will show, a useful effect appears in the scaling of the classical bound when increasing the number of local measurement settings, leading to a significant improvement on noise robustness when compared to the standard Mermin inequality.

\subsection{The generalized Mermin operator and statement of the main results}\label{sec:intro_mermin}

\noindent
In the main text we introduced the Bell operator
\begin{equation}
    \MM_{m,n}
    =
    \frac1{m^{n-1}}\sum_{\substack{\bm{x}\in\{0,\ldots,m-1\}^n\\ s(\bm{x})\equiv 0 \,(\mathrm{mod}\, m)}}
    (-1)^{s(\bm{x})/m}
    \bigotimes_{j=1}^n A_{x_j}^{(j)},
    \label{eq:gen_mermin_op}
\end{equation}
where $\bm{x}=(x_1,\ldots,x_n)$ is the vector of local measurement settings and $s(\bm{x})=\sum_{j=1}^n x_j$.
The parameters $n$ and $m$ correspond, respectively, to the number of parties and the number of measurement settings per party.
This expression can be seen as a generalization of the Mermin inequality \cite{sm:mermin1990extreme} to any number of measurement settings, and reduces to it when $m = 2$.

\smallskip
As an example, for $m=4$ and $n = 3$, 
the $\MM_{4, 3}$ operator contains the $4^2$ triples with $x_1+x_2+x_3\in\{0,4,8\}$:
\begin{equation}
\begin{aligned}
    \MM_{4,3}=\frac1{16}\Big(&A_0A_0A_0
    - \sum_{\mathrm{perm.}} A_0A_1A_3
    - \sum_{\mathrm{perm.}} A_0A_2A_2
    - \sum_{\mathrm{perm.}} A_1A_1A_2
    + \sum_{\mathrm{perm.}} A_2A_3A_3\Big),
\end{aligned}
\end{equation}
where each product is a tensor product over the three parties, the sums run over distinct permutations of the displayed factors, and party labels are inferred by the ordering.

\medskip
The form of \cref{eq:gen_mermin_op} is particularly interesting when applied to the GHZ state.
Indeed, using the equatorial measurements
\begin{equation}
    A_x = \cos\!\left(\frac{\pi x}{m}\right)X+\sin\!\left(\frac{\pi x}{m}\right)Y,
    \qquad x\in\{0,\ldots,m-1\},
\end{equation}
it follows that
\begin{equation}
    \bra{\mathrm{GHZ}_n}\bigotimes_{j=1}^n A_{x_j}^{(j)}\ket{\mathrm{GHZ}_n}
    =\cos\!\left(\frac{\pi s(\bm{x})}{m}\right).
\end{equation}
Consequently, whenever $s(\bm{x})\equiv0\pmod m$, the corresponding tensor product stabilizes $\GHZ{n}$ with eigenvalue $(-1)^{s(\bm{x})/m}$.
The summation in \cref{eq:gen_mermin_op} selects precisely these terms, and the coefficients are such that each term contributes a $+1$ on a $\GHZ{n}$ measured with the equatorial measurements.
Thus the algebraic maximum of the generalized Mermin expression is achievable in quantum theory, and increasing $m$ has the effect of adding more GHZ stabilizer correlators.

\medskip
The generalized Mermin expressions can also be interpreted as a family of pseudo-telepathy games \cite{sm:brassard2005quantum}.
    Such games are interesting because they turn Bell nonlocality into a communication-free cooperative task and give an operational form to the separation between quantum and classical resources~\cite{sm:brassard2005quantum}.
    In our case, the players receive inputs $x_j\in\{0,\ldots,m-1\}$ promised to satisfy $s(\bm{x})\equiv0\pmod m$, each player outputs a bit $y_j$, and they win when $\sum_{j=1}^n y_j \equiv \frac{s(\bm{x})}{m}\pmod 2$.
    For $m = 2$, this is Mermin's parity game, for $m=2^\ell$, this is the parity-game family with $\ell$-bit inputs, and with the parameter choice $\ell=\lceil\lg n\rceil-1$ they are known as extended parity games~\cite{sm:buhrman2003combinatorics,sm:brassard2005quantum}.
    To the best of our knowledge, the exact classical value of this family of games is an open problem.
    The local bounds derived below solve this problem in the regimes covered by \Cref{prop:bound_even,prop:bound_odd}.

\medskip
The main results of this appendix are stated in \Cref{prop:bound_even,prop:bound_odd} and \Cref{cor:nonlocal_depth}.
First, we argue that restricting $m$ to powers of two captures the cases relevant for improving the Bell ratio, because adding an odd factor to $m$ cannot improve the ratio between the GHZ quantum value and the local bound (\Cref{lem:powers-of-two}).
Then, \Cref{prop:bound_even} covers the classical bounds for all sufficiently large even numbers of parties (see conditions in the statement), yielding
\begin{equation}
    C_{m,n}=\frac 2{m^n}\sum_{j\in[\frac m2]}\sin^{-n}\!\left(\frac{(2j+1)\pi}{2m}\right).
\end{equation}
The case for odd $n$ is discussed in \Cref{prop:bound_odd}, where we find the bound
\begin{equation}
    C_{m,n}=\frac 2{m^n}\sum_{j\in[\frac m2]}(-1)^j\cos\!\left(\frac{(2j+1)\pi}{2m}\right)\sin^{-n}\!\left(\frac{(2j+1)\pi}{2m}\right).
\end{equation}
Lastly, \Cref{cor:nonlocal_depth} shows that the nonlocality depth bounds in terms of the maximal group size $p$, where the parties are allowed to communicate inside each group, are given by
\begin{equation}
    \max_{p\text{-prod.}} \expval{\MM_{m,n}} = C_{m,\lceil n/p\rceil}.
\end{equation}

\subsection{An equivalent expression for the generalized Mermin operator}

For the derivation of the local bounds, it is convenient to rewrite the generalized Mermin operator in a product form.
For each party $j$ and each $k\in\{0,\ldots,m-1\}$, define
\begin{equation}\label{eq:phased_obs_sum}
    B_k^{(j)}=\sum_{l=0}^{m-1}e^{i\pi(2k+1)l/m}A_l^{(j)}.
\end{equation}
We will show that
\begin{equation}
    \MM_{m,n}=\frac1{m^n}\sum_{k=0}^{m-1}\bigotimes_{j=1}^n B_k^{(j)}.
    \label{eq:inequality-short}
\end{equation}

For any fixed $k$, expanding the above tensor product gives
\begin{equation}
    \bigotimes_{j=1}^n B_k^{(j)}
    =
    \sum_{\bm{x}\in\{0,\ldots,m-1\}^n}
    e^{i\pi(2k+1)s(\bm{x})/m}\bigotimes_{j=1}^n A_{x_j}^{(j)}.
\end{equation}
Indeed, each choice of a setting vector $\bm{x}=(x_1,\ldots,x_n)$ picks the term $A_{x_j}^{(j)}$ from the $j$th factor and contributes the phase $e^{i\pi(2k+1)x_j/m}$.
Multiplying these local phases produces the collective phase $e^{i\pi(2k+1)s(\bm{x})/m}$.

\medskip
Substituting this expansion into the right-hand side of \cref{eq:inequality-short}, we find
\begin{equation}
    \frac1{m^n}\sum_{k=0}^{m-1}\bigotimes_{j=1}^n B_k^{(j)}
    =
    \sum_{\bm{x}\in\{0,\ldots,m-1\}^n} c(\bm{x})\bigotimes_{j=1}^n A_{x_j}^{(j)},
\end{equation}
with the coefficient
\begin{equation}
\begin{aligned}
    c(\bm{x})
    &=\frac1{m^n}\sum_{k=0}^{m-1}e^{i\pi(2k+1)s(\bm{x})/m}\\
    &=\frac{e^{i\pi s(\bm{x})/m}}{m^n}\sum_{k=0}^{m-1}e^{k \, (2\pi i s(\bm{x})/m)}.
\end{aligned}
\end{equation}
The remaining sum is the sum of all powers of the $m$th roots of unity $e^{2\pi i s(\bm{x})/m}$.
Hence
\begin{equation}
\sum_{k=0}^{m-1}e^{k \, (2\pi i s(\bm{x})/m)} =
    \begin{cases}
        m, & s(\bm{x})\equiv 0\pmod m, \\
        0, & s(\bm{x})\not\equiv 0\pmod m.
    \end{cases}
\end{equation}

This yields
\begin{equation}
    c(\bm{x})=
    \begin{cases}
        \dfrac1{m^{n-1}}(-1)^{s(\bm{x})/m}, & s(\bm{x})\equiv0\pmod m,\\[1ex]
        0, & s(\bm{x})\not\equiv0\pmod m.
    \end{cases}
\end{equation}
These are exactly the coefficients appearing in \cref{eq:gen_mermin_op}, proving \cref{eq:inequality-short}.

\medskip
When $m$ is even, the sum over $k$ can be reduced by pairing Hermitian conjugate terms. Indeed, for every integer $l$, 
\begin{equation}
    e^{i\pi(2(m-1-k)+1)l/m}=e^{-i\pi(2k+1)l/m},
\end{equation}
and hence $B_{m-1-k}^{(j)}=(B_k^{(j)})^\dagger$, implying that
\begin{equation}
\bigotimes_{j=1}^n B_{m-1-k}^{(j)}
=
\left(\bigotimes_{j=1}^n B_k^{(j)}\right)^\dagger .
\end{equation}
Pairing the terms with indices $k$ and $m-1-k$ then gives
\begin{equation}
\begin{aligned}
    \MM_{m,n}=\frac2{m^n}\operatorname{Re}\left[\sum_{k=0}^{\frac m2-1}\bigotimes_{j=1}^n B_k^{(j)}\right].
    \label{eq:inequality-short2}
\end{aligned}
\end{equation}

\subsection{Local bounds}

{We now determine the local bounds of the generalized Mermin expressions.
The proof is organized as follows.
First, \Cref{lem:powers-of-two} shows that multiplying the number of settings $m$ by an odd factor cannot improve the ratio between the maximum quantum value and the local bound, so it is sufficient to focus on $m = 2^x$.
We then use the product form from \cref{eq:inequality-short} and express the deterministic local strategies \cite{sm:brunner2014bell} in a Fourier representation [\cref{eq:vmk}].
This reduces the problem to a single-party upper bound in terms of the $n$-norm of an $m$-component vector [\cref{eq:norm_bound}].
For an even number of parties, this upper bound is tight because a deterministic strategy can be paired with its conjugate strategy (\Cref{lem:conjugate_strategy} and \Cref{cor:lb_norm}).
It remains to find which single-party strategies maximize this $n$-norm.
\Cref{lem:2norm,lem:max_norm} give the two ingredients for this step: the $2$-norm is fixed for all strategies, and the largest possible $\infty$-norm is attained by sign choices that align equally spaced unit vectors in one half-plane.
For sufficiently large even $n$, norm interpolation shows that the maximizers of the $\infty$-norm also maximize the relevant $n$-norm, yielding the local bounds in \Cref{prop:bound_even}.
The case for odd $n$ is then analyzed in \Cref{prop:bound_odd}.
}

\medskip
\medskip
First, we show that the interesting cases are those for which $m=2^x$, because taking $m=2^x y$ with $y$ being an odd integer does not improve the ratio between the local and quantum bounds.
\begin{lem}
    The local bound $C_{2^x y,n}$ is lower-bounded for odd $y$ as
    \begin{equation}
        C_{2^x y,n}\geq C_{2^x,n}
    \end{equation}
    while the quantum bound does not change: $ Q_{2^x y,n}= Q_{2^x,n}=1 $. Hence, the maximal quantum violation does not increase:
    \begin{equation}
        \frac{Q_{2^xy,n}}{C_{2^xy,n}}\leq \frac{Q_{2^x,n}}{C_{2^x,n}}.
    \end{equation}
\label{lem:powers-of-two}
\end{lem}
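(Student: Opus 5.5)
The plan is to prove the inequality $C_{2^x y,n}\ge C_{2^x,n}$ by a direct simulation argument. Every local deterministic strategy for $m'=2^x$ settings will be turned into one for $m=2^x y$ settings that achieves exactly the same value of the normalized Bell expression. The statement $Q_{2^xy,n}=Q_{2^x,n}=1$ needs no new work: it is the GHZ computation already given, since every allowed term contributes $+1$ for any $m$. The ratio inequality then follows by dividing.

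First rewrite $\langle\MM_{m,n}\rangle$ for a deterministic strategy $a_j:\{0,\dots,m-1\}\to\{\pm1\}$ as an average. There are exactly $m^{n-1}$ allowed setting vectors, so
\begin{equation}
\langle\MM_{m,n}\rangle=\mathbb{E}_{\bm{x}}\Big[(-1)^{s(\bm{x})/m}\prod_j a_j(x_j)\Big],
\end{equation}
where $\bm{x}$ is uniform over $\{\bm{x}:s(\bm{x})\equiv0 \pmod m\}$. Given an optimal strategy $a'_j$ for $m'$, define a strategy for $m=m'y$ as follows. Party $j$ writes its input as $x_j=x'_j+m'c_j$ with $x'_j\in\{0,\dots,m'-1\}$ and $c_j\in\{0,\dots,y-1\}$, and outputs $a_j(x_j)=(-1)^{c_j}a'_j(x'_j)$.

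The key arithmetic step concerns any allowed $\bm{x}$. Since $s(\bm{x})\equiv0\pmod m$ implies $s(\bm{x})\equiv 0\pmod{m'}$, we get $\sum_j x'_j\equiv0\pmod{m'}$, so $\bm{x}'$ is an allowed input for the $m'$ expression. Moreover $s(\bm{x})/m'=\sum_j x'_j/m'+\sum_j c_j$. Because $s(\bm{x})/m'=y\cdot s(\bm{x})/m$ and $y$ is odd, the quantities $s(\bm{x})/m$ and $s(\bm{x})/m'$ have the same parity. Hence
\begin{equation}
(-1)^{s(\bm{x})/m}=(-1)^{s(\bm{x}')/m'}(-1)^{\sum_j c_j}.
\end{equation}
The factors $(-1)^{c_j}$ in the outputs cancel the carry sign exactly. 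Each term of the new expression therefore equals the corresponding term $(-1)^{s(\bm{x}')/m'}\prod_j a'_j(x'_j)$ of the old one.

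The step I expect to need the most care is checking that the induced distribution of $\bm{x}'$ is uniform over the $m'$-allowed inputs. Fix an allowed $\bm{x}'$. The number of $\bm{c}\in\{0,\dots,y-1\}^n$ with $s(\bm{x}'+m'\bm{c})\equiv0\pmod{m'y}$ is the number satisfying $\sum_j c_j\equiv -s(\bm{x}')/m'\pmod y$. This count is $y^{n-1}$, independent of $\bm{x}'$. Consequently the new strategy attains exactly $\langle\MM_{m',n}\rangle$ of the original one, which gives $C_{m'y,n}\ge C_{m',n}$. Combining this with $Q=1$ for both values of $m$ yields $Q_{2^xy,n}/C_{2^xy,n}\le Q_{2^x,n}/C_{2^x,n}$.
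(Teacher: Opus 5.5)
Your proof is correct, and the strategy you build is exactly the paper's: the paper sets $a^{(j)}_{2^xp+q}=(-1)^p b^{(j)}_q$, which is your $a_j(x'_j+m'c_j)=(-1)^{c_j}a'_j(x'_j)$ with $p=c_j$ and $q=x'_j$. The two arguments differ only in how they show that the Bell values coincide.

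The paper works in the product (Fourier) form $\langle\MM_{m,n}\rangle=m^{-n}\sum_k\prod_j v_m^k(\bm a^{(j)})$, which it derives just before the lemma. Summing the geometric series over the carry index $p$ gives $v^k_{2^xy}(\bm a^{(j)})=y\,\delta_{[2k+1]_y,0}\,v^{k'}_{2^x}(\bm b^{(j)})$. So only the frequencies with $2k+1=y(2k'+1)$ survive, and re-indexing $k\to k'$ reproduces the $m=2^x$ value.

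You stay with the defining sum over allowed settings and replace the geometric sum by two elementary facts. The first is the term-by-term sign identity: odd $y$ gives $s(\bm x)/m$ and $s(\bm x)/m'$ the same parity, and the output signs $(-1)^{c_j}$ absorb the carries. The second is the fiber count of $y^{n-1}$ lifts per allowed $\bm x'$, which makes the induced distribution on $\bm x'$ uniform.

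Your route is more self-contained. It needs neither the product-form identity nor root-of-unity bookkeeping, and it shows the role of odd $y$ as a plain parity statement. The paper's route gives a stronger per-party fact: $\bm v_{2^xy}(\bm a)$ is $y$ times a zero-padded copy of $\bm v_{2^x}(\bm b)$. It is also written in the same $v_m^k$ language that the later norm-based local bounds rely on.
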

\begin{proof}
    It is sufficient to consider deterministic strategies (with $\pm 1$ assignments) to investigate the local bounds. Let $\bm a^{(j)}\in\{\pm1\}^{2^xy}$ and $\bm b^{(j)}\in\{\pm1\}^{2^x}$ be the strategies of party $j$ for the case of $m=2^xy$ and $m=2^x$ measurement settings, respectively. We show that the expectation values achieved with the local strategies $\bm b^{(j)}$ for $m=2^x$ can be attained for $m=2^xy$ with a suitable choice of local strategies $\bm a^{(j)}$, too. We define the expectation values for the operators $B^{(j)}_k$ (cf.\ \cref{eq:phased_obs_sum}) given the strategy $\bm c^{(j)}$ as
    \begin{equation}
        v^{k}_{m}(\bm c^{(j)})= \sum_{l=0}^{m-1}  c^{(j)}_{l} e^{i\pi (2k+1)\frac{l}{m}}
    \label{eq:vmk}
    \end{equation}
    and the expectation value for the whole expression becomes
    \begin{equation}\label{eq:M_vmk}
    	\expval{\MM_{m,n}}_{\bm a^{(1)}, \ldots, \bm a^{(n)}}=\frac1{m^n}\sum_{k=0}^{m-1}\prod_{j=1}^n v_m^{k}(\bm a^{(j)}).
    \end{equation}
    Now, we extend the strategies $\bm b^{(j)}$ to $m=2^xy$ with $a^{(j)}_{2^x p+q}=(-1)^p b^{(j)}_q$ for all $p\in\{0,\dots ,y-1\}$ and $q\in\{0,\dots,2^x-1\}$ where $\bm b^{(j)}$ are arbitrary strategies for the case of $m=2^x$. We split the index of $\bm a^{(j)}$ into $p$ and $q$ such that we can explicitly compute the sum over $p$ as a geometric sum: 
    \begin{equation}
    \begin{aligned}
        v_{2^x y}^{k}(\bm a^{(j)})&=\sum_{q=0}^{2^x-1}\sum_{p=0}^{y-1}e^{i\pi (2k+1)\frac{2^x p+q}{2^x y}} (-1)^p b_q^{(j)}\\
        &=\sum_{q=0}^{2^x-1}e^{i\pi (2k+1)\frac{q}{2^x y}} b_q^{(j)}\underbrace{\sum_{p=0}^{y-1}\left(e^{i\pi \frac{2k+1}{y}(y+1)}\right)^p}_{=y\delta_{[2k+1]_y,0}}\\
        &=y\delta_{[2k+1]_y,0}\sum_{q=0}^{2^x-1}e^{i\pi \frac{(2k+1)}{y}\frac{q}{2^x}} b_q^{(j)}\\
        &=y\delta_{[2k+1]_y,0} v_{2^x}^{\frac{1}{2} \left( \frac{2k+1}{y} - 1 \right)}(\bm b^{(j)}).
    \end{aligned}
    \label{eq:partial-sum-aj-to-bj}
    \end{equation}
    To justify the Kronecker delta in \cref{eq:partial-sum-aj-to-bj}, let
    \[
        \omega=e^{i\pi \frac{2k+1}{y}(y+1)}.
    \]
    Since $y$ is odd, $\omega^y=1$, and because $\gcd((y+1)/2,y)=1$, we have $\omega=1$ if and only if $y$ divides $2k+1$. Therefore,
    \[
        \sum_{p=0}^{y-1}\omega^p
        =
        \begin{cases}
            y, & [2k+1]_y=0,\\
            0, & [2k+1]_y\neq 0,
        \end{cases}
        =
        y\delta_{[2k+1]_y,0}.
    \]
    This leads to
    \begin{equation}
        \begin{aligned}
            \expval{\MM_{2^x y,n}}_{\bm a^{(1)}, \ldots, \bm a^{(n)}}&=\frac1{(2^x y)^n} \sum_{k=0}^{2^x y-1}\prod_{j=1}^n v_{2^x y}^{k}(\bm a^{(j)})=\frac{1}{(2^x)^n}\sum_{k^\prime=0}^{2^x -1}\prod_{j=1}^n v_{2^x}^{k^\prime}(\bm b^{(j)})=\expval{\MM_{2^x,n}}_{\bm b^{(1)}, \ldots,\bm b^{(n)}},
        \end{aligned}
    \end{equation}
    where in the second equality we used \cref{eq:partial-sum-aj-to-bj} together with the fact that the condition $\delta_{[2k+1]_y,0}$ restricts the sum to those $k$ for which $2k+1=y(2k'+1)$, yielding the re-indexing $k'=\frac12\left(\frac{2k+1}{y}-1\right)\in\{0,\dots,2^x-1\}$.
    Taking $\bm b^{(j)}$ to be the optimal deterministic strategy in the $m = 2^x$ case shows that there is a strategy such that $C_{2^x y,n}\geq C_{2^x,n}$. As argued in \Cref{sec:intro_mermin}, the quantum bound is the same in both cases, thus concluding the proof.
\end{proof}

In what follows, we will determine the local bounds for the generalized Mermin expressions.
Due to the improvements in noise robustness occurring only when the number of measurements $m$ is a power of two, we will not consider other cases.

\medskip
For deterministic strategies $\bm a^{(1)},\dots,\bm a^{(n)}\in\{\pm1\}^m$, define
\begin{equation}
    \bm v_m(\bm a^{(j)})=\left( v_m^0(\bm a^{(j)}),\ldots,v_m^{m-1}(\bm a^{(j)}) \right) ,
\end{equation}
with $v_m^k(\bm a^{(j)})$ given as in \cref{eq:vmk}. With \cref{eq:M_vmk} the objective function can be bounded as
\begin{equation}
	m^n \expval{\MM_{m,n}}_{\bm a^{(1)}, \ldots, \bm a^{(n)}}=\sum_{k=0}^{m-1}\prod_{j=1}^n v_m^{k}(\bm a^{(j)})
    \leq \sum_{k=0}^{m-1}\prod_{j=1}^n \abs{ v_m^k(\bm a^{(j)}) }
    \leq \prod_{j = 1}^n \left( \sum_{k = 0}^{m-1} \abs{ v_m^k(\bm a^{(j)}) }^n \right)^{1 / n}
    = \prod_{j = 1}^n \lVert \bm v_m(\bm a^{(j)}) \rVert_n  ,
\end{equation}
where $\abs{\cdot}$ is the element-wise absolute value and {the generalized} Hölder's inequality  has been used for the second inequality.
Thus, it is sufficient to choose the vectors $\bm v_m(\bm a^{(j)})$ with maximal $n$-norm:
\begin{equation}
    \max_{\bm a^{(1)}, \ldots, \bm a^{(n)}} \expval{\MM_{m,n}}_{\bm a^{(1)}, \ldots, \bm a^{(n)}}
    \leq \frac{1}{m^n} \max_{\bm a \in \{ \pm 1 \}^m} \lVert \bm v_m(\bm a) \rVert_n^n .
    \label{eq:norm_bound}
\end{equation}
Therefore, we can upper bound the local bound by the maximum $n$-norm of the vector
$\bm v_m(\bm a)$, being a function of the single-party deterministic strategy vector 
$\bm a$.


This upper bound is tight for an even number of parties as a consequence of the following lemma.
\begin{lem}\label{lem:conjugate_strategy}
    For every deterministic strategy $\bm a \in  \{ \pm 1 \}^m$, there is a conjugate strategy $\tilde{\bm a} \in \{ \pm 1 \}^m$ such that $\bm v_m(\bm a) = \bm v_m(\tilde{\bm a})^\ast$,
    where $^\ast$ denotes the element-wise complex conjugate.
\end{lem}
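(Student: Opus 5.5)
The plan is to write down the conjugate strategy explicitly as a signed reflection of the index set, and then check the identity $\bm v_m(\bm a)=\bm v_m(\tilde{\bm a})^\ast$ componentwise using \cref{eq:vmk}. Since $\bm a$ is real, complex conjugation only acts on the phases:
\begin{equation}
    v_m^k(\bm a)^\ast=\sum_{l=0}^{m-1}a_l\,e^{-i\pi(2k+1)l/m}.
\end{equation}
So the task reduces to rewriting each phase $e^{-i\pi(2k+1)l/m}$ as $\pm e^{i\pi(2k+1)l'/m}$ for some $l'\in\{0,\ldots,m-1\}$. The bijection $l\mapsto l'$ and the sign must not depend on $k$; they can then be absorbed into a new sign vector $\tilde{\bm a}$.

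The key step is the reflection $l\mapsto m-l$ for $l\in\{1,\ldots,m-1\}$, keeping $l=0$ fixed. Since $e^{-i\pi(2k+1)}=-1$ for every integer $k$, we have
\begin{equation}
    e^{-i\pi(2k+1)l/m}=e^{i\pi(2k+1)(m-l)/m}\,e^{-i\pi(2k+1)}=-e^{i\pi(2k+1)(m-l)/m}.
\end{equation}
The sign $-1$ is independent of $k$, which is exactly what is needed. This suggests defining
\begin{equation}
    \tilde a_0=a_0,\qquad \tilde a_l=-a_{m-l}\quad (l=1,\ldots,m-1),
\end{equation}
which is clearly again in $\{\pm1\}^m$.

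Substituting and re-indexing with $l'=m-l$ gives
\begin{equation}
    v_m^k(\bm a)^\ast=a_0+\sum_{l'=1}^{m-1}(-a_{m-l'})\,e^{i\pi(2k+1)l'/m}=v_m^k(\tilde{\bm a})
\end{equation}
for every $k\in\{0,\ldots,m-1\}$. Conjugating both sides yields the stated identity $\bm v_m(\bm a)=\bm v_m(\tilde{\bm a})^\ast$. The construction also works for any $m$, not only powers of two.

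There is no real obstacle here; the only point requiring care is finding a reflection whose sign is uniform in $k$. The naive map $l\mapsto -l$ leaves the index range $\{0,\ldots,m-1\}$. It is the odd frequency $2k+1$ that makes the wrap-around by $m$ contribute the $k$-independent factor $-1$, and this factor is what the minus sign in $\tilde a_l$ compensates. Afterwards, as the paper intends, one pairs $\bm a$ for half of the parties with $\tilde{\bm a}$ for the other half. For even $n$ this turns each product in \cref{eq:M_vmk} into $\abs{v_m^k(\bm a)}^n$, saturating \cref{eq:norm_bound}.
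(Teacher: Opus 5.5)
Your proof is correct and follows the paper's approach. Your uniform rule $\tilde a_0=a_0$, $\tilde a_l=-a_{m-l}$ for $l\geq 1$ is exactly the paper's conjugate strategy: in both cases $a_l=\pm a_{m-l}$, and at the self-paired index $l=m/2$, the paper's definition reduces to this formula. Your single phase identity $e^{-i\pi(2k+1)}=-1$ does the work of the paper's pairwise check that each pair contributes either a purely imaginary or a purely real term.
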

\begin{proof}
    The strategy $\tilde{\bm a}$ is constructed as follows. We set $\tilde a_0=a_0$, because $a_0e^{i(2k+1)\frac{0\pi}{m}}$ is real for every $k$. The other elements $a_l$ with $l\in\{1,\ldots, m-1\}$ undergo the following change. If $a_l=a_{m-l}$, then
    \begin{equation}
    	a_l e^{i(2k+1)\frac{l\pi}{m}}+a_{m-l}e^{i(2k+1)\frac{(m-l)\pi}{m}}= a_l\left(e^{i(2k+1)\frac{l\pi}{m}}-e^{-i(2k+1)\frac{l\pi}{m}}\right)=2ia_l\sin((2k+1)\frac{l\pi}{m})
    \end{equation} 
    is imaginary and we choose $\tilde a_l=\tilde a_{m-l}=-a_l$. Otherwise, if $a_l=-a_{m-l}$, then
    \begin{equation}
    	a_l e^{i(2k+1)\frac{l\pi}{m}}+a_{m-l}e^{i(2k+1)\frac{(m-l)\pi}{m}}= a_l\left(e^{i(2k+1)\frac{l\pi}{m}}+e^{-i(2k+1)\frac{l\pi}{m}}\right)=2a_l\cos((2k+1)\frac{l\pi}{m})
    \end{equation} 
    is real and we therefore keep the original signs,
    \[
    \widetilde a_l=a_l,
    \qquad
    \widetilde a_{m-l}=a_{m-l}=-a_l.
    \]
    In the case of $m$ even, the first case applies for the self-paired index $l=m/2$:
    its contribution is purely imaginary and is conjugated by setting
    $\widetilde a_{m/2}=-a_{m/2}$.
    Plugging this into the definition of $\bm v_m(\bm a)$ leads to the claim.
\end{proof}

\begin{cor}\label{cor:lb_norm}
    The local bound for the case of an even number of parties can be expressed via vector norms as
    \begin{equation}
        C_{m,n}=\frac1{m^n} \max_{\bm a \in \{\pm 1 \}^m} \norm{\bm v_{m}(\bm a)}_n^n
    \end{equation}
    where $\bm a\in\{\pm1\}^m$ is a single-party deterministic strategy.
\end{cor}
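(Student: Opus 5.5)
The upper bound $C_{m,n}\le \frac1{m^n}\max_{\bm a}\norm{\bm v_m(\bm a)}_n^n$ is already \cref{eq:norm_bound}, so only a matching deterministic local strategy is needed. Let $\bm a^\star\in\{\pm1\}^m$ maximize $\norm{\bm v_m(\bm a)}_n$, and let $\tilde{\bm a}^\star$ be its conjugate strategy from \Cref{lem:conjugate_strategy}, so that $v_m^k(\tilde{\bm a}^\star)=v_m^k(\bm a^\star)^\ast$ for all $k$. Since $n$ is even, write $n=2r$. Assign $\bm a^{(j)}=\bm a^\star$ to parties $j=1,\ldots,r$ and $\bm a^{(j)}=\tilde{\bm a}^\star$ to parties $j=r+1,\ldots,n$.

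Inserting this strategy into \cref{eq:M_vmk}, each summand becomes
\begin{equation}
    \prod_{j=1}^n v_m^k(\bm a^{(j)})=\bigl(v_m^k(\bm a^\star)\bigr)^r\bigl(v_m^k(\bm a^\star)^\ast\bigr)^r=\abs{v_m^k(\bm a^\star)}^{2r}=\abs{v_m^k(\bm a^\star)}^{n}.
\end{equation}
Summing over $k$ gives
\begin{equation}
    m^n\expval{\MM_{m,n}}=\sum_{k=0}^{m-1}\abs{v_m^k(\bm a^\star)}^n=\norm{\bm v_m(\bm a^\star)}_n^n.
\end{equation}
This value is attained by a deterministic local strategy, so it is a lower bound on $C_{m,n}$. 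Combined with \cref{eq:norm_bound}, equality follows. The result is real automatically, consistent with the Hermiticity of $\MM_{m,n}$.

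The whole proof is in the assignment step. Both inequalities used in \cref{eq:norm_bound}, the triangle inequality and Hölder's inequality, become equalities here: the triangle inequality because every product is a nonnegative real number, and Hölder's inequality because all $n$ vectors have the same modulus profile. The only nontrivial ingredient is the existence of the conjugate strategy, which is \Cref{lem:conjugate_strategy}; the even parity of $n$ is what allows the parties to be split into equal halves. I expect no real obstacle. I would only check that the conjugation in \Cref{lem:conjugate_strategy} holds for every $k$ at once, which it does, because the sign changes there are independent of $k$.
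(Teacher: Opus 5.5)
Your proof is correct and follows the paper's argument: the upper bound is \cref{eq:norm_bound}, and the matching lower bound comes from pairing a maximizing strategy with its conjugate from \Cref{lem:conjugate_strategy}, so that each summand becomes $\abs{v_m^k(\bm a)}^n$. The only difference is cosmetic. The paper assigns $\bm a$ to the odd-labelled parties and $\tilde{\bm a}$ to the even-labelled ones instead of using first and second halves, which does not affect the argument.
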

\begin{proof}
	We choose a strategy $\bm a\in\{\pm1\}^m$  for all odd parties $\bm a^{(2j-1)}=\bm a$, $j\in\{1,\ldots \frac n2\}$ and its conjugate strategy for all even parties $\bm a^{(2j)}=\tilde{ \bm a}$, $j\in\{1,\ldots \frac n2\}$.  Then, $v^k_m(\bm a^{(2j-1)})v^k_m(\bm a^{(2j)})=\abs{ v_m^k(\bm a) }^2$ leads to
	\begin{equation}
		\expval{\MM_{m,n}}_{\bm a^{(1)}, \ldots, \bm a^{(n)}}=\frac1{m^n}\sum_{k=0}^{m-1}\prod_{j=1}^n v_m^{k}(\bm a^{(j)})=\frac1{m^n}\sum_{k=0}^{m-1}\prod_{j=1}^n \abs{v_m^{k}(\bm a)}=\frac1{m^n}\sum_{k=0}^{m-1} \abs{v_m^{k}(\bm a)}^n=\frac{1}{m^n}\norm{\bm v_m(\bm a)}_n^n.
	\end{equation}
	Maximizing over the strategies $\bm a$ saturates the upper bound in \cref{eq:norm_bound} and implies the claim.
\end{proof}

To proceed with the analysis, we will relate the $n$-norm to the $2$- and the $\infty$-norm via norm interpolation.
We start with some results regarding the $2$-norm and the $\infty$-norm.
\begin{lem}\label{lem:2norm}
    The $2$-norm of $\bm v_m(\bm a)$ does not depend on the single-party strategy $\bm a$:
    \begin{equation}
        \norm{\bm v_m(\bm a)}_2=m.
    \end{equation}
\end{lem}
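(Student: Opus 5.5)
The plan is to recognize the map $\bm a\mapsto\bm v_m(\bm a)$ as a unitary transform up to a factor $\sqrt m$ and then apply Parseval's identity. Writing out \cref{eq:vmk},
\begin{equation}
    v_m^k(\bm a)=\sum_{l=0}^{m-1}a_l\,e^{i\pi l/m}\,e^{2\pi i kl/m},
\end{equation}
so $\bm v_m(\bm a)$ is the (unnormalized, inverse-sign) discrete Fourier transform of the modulated vector $\bm b$ with entries $b_l=a_l e^{i\pi l/m}$. The modulation is a diagonal unitary, and the DFT matrix $F_{kl}=e^{2\pi i kl/m}$ satisfies $F^\dagger F=m\,\mathds{1}$.

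Concretely, I would compute $\norm{\bm v_m(\bm a)}_2^2=\sum_k\abs{v_m^k(\bm a)}^2$ by expanding the modulus squared as a double sum:
\begin{equation}
    \sum_{k=0}^{m-1}\sum_{l,l'=0}^{m-1}a_l a_{l'}\,e^{i\pi(2k+1)(l-l')/m}.
\end{equation}
Next I swap the order of summation and evaluate the inner sum over $k$. This is $e^{i\pi(l-l')/m}\sum_k e^{2\pi i k(l-l')/m}$, the same root-of-unity sum already used in the derivation of \cref{eq:inequality-short}. Since $\abs{l-l'}<m$, it equals $m\,\delta_{ll'}$.

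The double sum therefore collapses to $m\sum_l a_l^2$. Because every entry satisfies $a_l^2=1$, this gives $m\cdot m=m^2$, hence $\norm{\bm v_m(\bm a)}_2=m$, independent of $\bm a$.

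There is no serious obstacle here. The only points to check are that the phase $e^{i\pi(l-l')/m}$ factors out cleanly, which it does since it does not depend on $k$, and that only the diagonal $l=l'$ survives for indices in $\{0,\ldots,m-1\}$. The statement also does not need the strategy to be $\pm1$-valued beyond using $\abs{a_l}=1$.
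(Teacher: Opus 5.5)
Your proposal is correct and takes essentially the same approach as the paper: the paper identifies $\bm v_m(\bm a)$ as a Fourier transform of the modulated vector $a_l e^{i\pi l/m}$ and invokes Parseval. Your double-sum computation with root-of-unity orthogonality is simply that Parseval step written out explicitly, and it gives $\norm{\bm v_m(\bm a)}_2^2=m\sum_l a_l^2=m^2$ as claimed.
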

\begin{proof}
    The inverse discrete Fourier transform of $\bm v_m(\bm a)$ is
    \begin{equation}
        (\mathcal{F}^{-1}\left[ \bm v_m(\bm a) \right])_l=m e^{i\pi \frac{l}{m}}a_l .
    \end{equation}
    Using Parseval's theorem,
    \begin{equation}
        \norm{\bm v_m(\bm a)}_2=\frac1{\sqrt m}\norm{\mathcal{F}^{-1}(\bm v_m(\bm a))}_2=m.
    \end{equation}
\end{proof}
\begin{lem}\label{lem:max_norm}
    The maximal value for the infinity norm of $\bm v_m(\bm a)$ is
    \begin{equation}
        \max_{\bm a}\norm{\bm v_m(\bm a)}_\infty=\frac{1}{\sin(\frac{\pi}{2m})}.
    \end{equation}
    Every strategy $\bm a$ with $\norm{\bm v_m(\bm a)}_\infty=\frac{1}{\sin{\frac{\pi}{2m}}}$ leads to the same $n$-norms with
    \begin{equation}
        \norm{\bm v_m(\bm a)}_n^n=\sum_{l=0}^{m-1}\frac{1}{\sin[n]((2l+1)\frac{\pi}{2m})}.
    \end{equation}
    If $\norm{\bm v_m(\bm b)}_\infty<\frac{1}{\sin{\frac{\pi}{2m}}}$ for another strategy $\bm b\in\{\pm1\}^m$, then $\norm{\bm v_m(\bm b)}_\infty\leq\frac{1}{\sin(\frac{\pi}{2m})}-4\sin(\frac{\pi}{2m})$.
\end{lem}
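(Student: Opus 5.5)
Each component $v_m^k(\bm a)=\sum_{l} a_l e^{i\pi(2k+1)l/m}$ is a signed sum of the $m$ unit vectors $u_l^{(k)}=e^{i\theta_k l}$ with $\theta_k=\pi(2k+1)/m$. I would bound $\abs{v_m^k(\bm a)}$ by projecting onto the optimal direction, treating the three claims in turn.

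\textbf{Maximum of the $\infty$-norm.} For any unit vector $e^{i\phi}$,
\begin{equation}
\operatorname{Re}\bigl(e^{-i\phi}v_m^k\bigr)\le \sum_l \abs{\cos(\theta_k l-\phi)},
\end{equation}
with equality when every sign $a_l$ matches the sign of the corresponding cosine, i.e.\ all signed vectors lie in one half-plane. Thus $\abs{v_m^k}\le \max_\phi\sum_l\abs{\cos(\theta_k l-\phi)}$. Modulo $\pi$ the directions $\theta_k l$ form a subset of the $m$ equally spaced directions $\pi j/m$. When $\gcd(2k+1,m)=1$ (automatic for $m=2^r$, and true for $k=0$ in general) they are exactly these directions. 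The standard formula for $m$ equally spaced lines then gives
\begin{equation}
\max_\phi\sum_{j=0}^{m-1}\abs{\cos(\pi j/m-\phi)}=\frac{1}{\sin(\pi/2m)},
\end{equation}
attained at $\phi=\pi/2m$ up to shifts by $\pi/m$. This value is reached at $k=0$ by the sign choice $a_l=1$ for $l<m$, which aligns the vectors $e^{i\pi l/m}$ in the upper half-plane. For general $m$ and other $k$ with fewer distinct lines each counted with multiplicity, I would check that the sum is no larger, by the same half-plane computation.

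\textbf{Equality of the $n$-norms.} Suppose $\abs{v_m^{k_0}(\bm a)}=1/\sin(\pi/2m)$. Equality forces $a_l=\operatorname{sgn}\cos(\theta_{k_0}l-\phi)$ for one of the $2m$ optimal $\phi$ (the points midway between lines). So $\bm a$ is, up to a global sign, a "cut" strategy $a_l=\pm1$ according to which side of a cutting line through the origin $e^{i\theta_{k_0}l}$ lies on.

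For $m=2^r$ I would then compute $v_m^k(\bm a)$ for all $k$ explicitly as a geometric-type sum. For the basic cut $a_l=1$ we get
\begin{equation}
v_m^k=\frac{1-e^{i\pi(2k+1)}}{1-e^{i\theta_k}}=\frac{2}{1-e^{i\theta_k}},
\qquad
\abs{v_m^k}=\frac{1}{\sin\bigl((2k+1)\pi/2m\bigr)}.
\end{equation}
The other maximizers are obtained from this one by the symmetries: global sign, the cyclic shift $a_l\to a_{l+1}$ with a sign flip on wraparound, and the multiplication $l\mapsto(2k_0+1)l$. Each of these acts on $\bm v_m$ by phases and a permutation of the index $k$, so the multiset $\{\abs{v_m^k}\}$, and hence every $n$-norm, is the same. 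Note that $k$ and $m-1-k$ give equal moduli, so the sum over $l\in\{0,\dots,m-1\}$ matches the stated formula.

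\textbf{Gap for non-maximizers.} If $\bm b$ is not a maximizer, then for every $k$ and every $\phi$ at least one sign disagrees with the optimal cut, or the direction is suboptimal. Flipping one sign costs $2\abs{\cos}$ of that term. I would bound the best non-optimal value as the maximum over the following cases:
\begin{itemize}
\item[(i)] an optimal $\phi$ with one sign flipped; the smallest $\abs{\cos}$ at the optimal $\phi$ is $\sin(\pi/2m)$, giving a loss of $2\sin(\pi/2m)$ per flip;
\item[(ii)] a different cut.
\end{itemize}
Since $\abs{v}$ is the maximum over $\phi$ of the projection, I need care. The claimed gap $4\sin(\pi/2m)$ suggests that one flip moves the optimal projection by $2\sin$ at two symmetric boundary elements, or equivalently that the sum $\sum a_l e^{i\theta l}$ changes by $2e^{i\theta l}$. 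I would compute $\abs{v-2u_l}$ directly, where $u_l$ is the boundary vector at angle $\pi/2m$ from the cutting line, and show it is at most $1/\sin(\pi/2m)-4\sin(\pi/2m)$ or smaller after optimizing. The clean route is to note that the non-maximal sums are of the form
\begin{equation}
\sum_{j}\epsilon_j e^{i\pi j/m}
\end{equation}
with a sign pattern that is not a contiguous half-circle. Any such pattern contains a boundary pair that can be swapped to increase the value, and I would bound the increase from below.

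This last gap estimate is the main obstacle. I would first try to show that the second-largest value is attained by the pattern with one pair of adjacent boundary vectors swapped, and compute it explicitly. It may come out as $1/\sin(\pi/2m)-4\sin(\pi/2m)$ only approximately, in which case I would bound it by this inequality.
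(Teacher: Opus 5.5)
Your first two claims are handled soundly and follow the paper's route. Maximizers are exactly the sign choices whose signed vectors lie in one open half-plane, and your bound $\operatorname{Re}(e^{-i\phi}v_m^k)\le\sum_l\abs{\cos(\theta_k l-\phi)}$ justifies this more cleanly than the paper's bare assertion. The geometric sum over a cut then gives $\abs{v_m^k}=1/\sin((2k+1)\pi/2m)$. Your symmetry $l\mapsto(2k_0+1)l$ needs a sign correction whenever $(2k_0+1)l$ wraps past $m$ modulo $2m$; with it, it makes rigorous the paper's ``without loss of generality'' reduction to $k_0=0$. Drop the aside about general $m$, though. For $m=3$, $k=1$ the three vectors are collinear and $\abs{v_3^1}$ reaches $3>1/\sin(\pi/6)$, so the lemma genuinely needs $m=2^r$.

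The genuine gap is the third claim, which you leave open, and your heuristics would not close it. A loss of projection at a fixed optimal $\phi$ does not bound $\abs{v}$, because $\abs{v}$ re-optimizes its direction. With $s=\sin(\pi/2m)$, flipping an outermost vector costs $2s$ in projection but yields another cut with no loss. Flipping the second-outermost costs $2\sin(3\pi/2m)=6s-8s^3$ in projection but only $4s$ in modulus.

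What is missing is a monotone improvement argument that reduces every non-maximizer to a one-flip perturbation of a cut; this is the paper's route. Fix $k$ and set $v=v_m^k(\bm b)$. For $m=2^r$ the signed vectors $b_le^{i\theta_kl}$ are one point from each antipodal pair of $\{e^{i\pi j/m}\}_{j=0}^{2m-1}$. If $\abs{v}<1/s$, they lie in no open half-plane. Hence some $w$ among them has $\operatorname{Re}(\bar v w)<0$: at most one can sit on the boundary line of $\{x:\operatorname{Re}(\bar v x)\ge0\}$, so otherwise a small rotation would give an open half-plane. Flipping $w$ gives $\abs{v-2w}^2=\abs{v}^2+4-4\operatorname{Re}(\bar v w)>\abs{v}^2$.

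Iterating must terminate at a cut. The last non-cut configuration on this path dominates $\bm b$ and differs from a cut in one sign. That sign must belong to an inner vector, since flipping an outermost vector of a cut produces another cut. Every cut is a rotation of the block $u_l=e^{i\pi l/m}$, $l=0,\dots,m-1$, with sum $v$. Flipping $u_l$ gives $\abs{v-2u_l}^2=s^{-2}+4-4\sin((2l+1)\pi/2m)/s$. Over inner $l$ this is largest at $l=1$ or $l=m-2$, where it equals $(1/s-4s)^2$ exactly by $\sin3x=3\sin x-4\sin^3x$; no approximation is involved.

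So the extremal non-maximizer is a cut with its second-outermost vector flipped. Since $\norm{\bm v_m(\bm b)}_\infty<1/s$ makes every component non-maximal, applying this to each $k$ yields $\norm{\bm v_m(\bm b)}_\infty\le 1/s-4s$.
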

\begin{proof}
    We can visualize $\bm v^k_m(\bm a)$ as the sum of unit vectors in the complex plane (\cref{fig:unit_vec}) where the $a_l$ determine which element of an antipodal pair is chosen.
    Since $m = 2^x$, the phases $e^{i \pi (2k+1) \frac{l}{m}}$ for $l \in \{0, \ldots, m - 1\}$ are the same as in $e^{i \pi \frac{l}{m}}$ up to reordering and multiples of $\pi$.
    Thus, to maximize $\abs{v_m^k(\bm a)}$ over $\bm a \in \{\pm 1\}^m$ it is sufficient to maximize on the first component $v^{0}_m(\bm a)$.
    Its absolute value is maximized if all unit vectors lie in the same half-plane. Without loss of generality, assume that for the first component of $\bm v_m(\bm a)$, all unit vectors $ a_l e^{i\pi l/m} $ are in the same half-plane. This is only possible if the sign of $a_l$ changes at most once with increasing $l$ and it implies that, up to a global sign flip, every maximizing strategy is of the form
    \begin{equation}
    	a_l=-1 \text{ for } l<r,\qquad a_l=1 \text{ for } l\ge r,
    \end{equation}
    for some $ r\in \{0,\ldots, m-1 \}$. Then, the components of $\bm v_m(\bm a)$ are
        \begin{equation}\label{eq:max_comp}
    	\begin{aligned}
    		v^k_m(\bm a)&=\sum_{l=0}^{m-1}e^{i\pi (2k+1)\frac{l}{m}}a_l\\
    		&=\sum_{l=0}^{r-1}e^{i\pi (2k+1)\frac{l}{m}}e^{i\pi(2k+1)}+\sum_{l=r}^{m-1}e^{i\pi (2k+1)\frac{l}{m}}\\
    		&=\sum_{l=0}^{r-1}e^{i\pi (2k+1)\frac{l+m}{m}}+\sum_{l=r}^{m-1}e^{i\pi (2k+1)\frac{l}{m}}\\
    		&=\sum_{l=r}^{m-1+r}e^{i\pi (2k+1)\frac{l}{m}}\\
    		&=e^{i\pi (2k+1)\frac{r}{m}}\sum_{l=0}^{m-1}e^{i\pi (2k+1)\frac{l}{m}}\\
    		&=e^{i\pi (2k+1)\frac{r}{m}}\frac{2}{1-e^{i\pi\frac{2k+1}{m}}}\\
    		&=\frac{e^{i\frac{\pi}{2m}\left(m+(2r-1)(2k+1)\right)}}{\sin(\frac{(2k+1)\pi}{2m})}.
    	\end{aligned}
    \end{equation}
    Thus, we get
    \begin{equation}
    	\norm{\bm v_m(\bm a)}_\infty=\frac{1}{\sin(\frac{\pi}{2m})},
    \end{equation}
    and the $n$-norm is determined by
    \begin{equation}
    	\norm{\bm v_m(\bm a)}_n^n=\sum_{l=0}^{m-1}\frac{1}{\sin[n]((2l+1)\frac{\pi}{2m})}.
    \end{equation}
    As argued before these strategies maximize the $\infty$-norm and the $n$-norm does not depend on which strategy we choose from this set.

    \begin{figure}[htbp]
	    \centering
	    \begin{minipage}{0.45\linewidth}
	        \centering
	        \includegraphics[width=\linewidth]{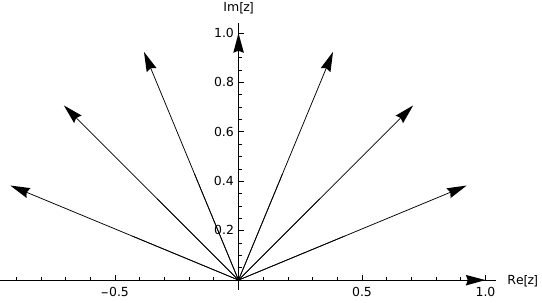}
	        \smallskip
	        \textbf{(a)} Unit vectors contributing to $v^0_m(\bm a)$ for $m=8$ measurements and $a_l=1$ for all $l\in\{0,\dots, m-1\}$.
	        \label{fig:unit_vec}
	    \end{minipage}
	    \hfill
	    \begin{minipage}{0.45\linewidth}
	        \centering
	        \includegraphics[width=\linewidth]{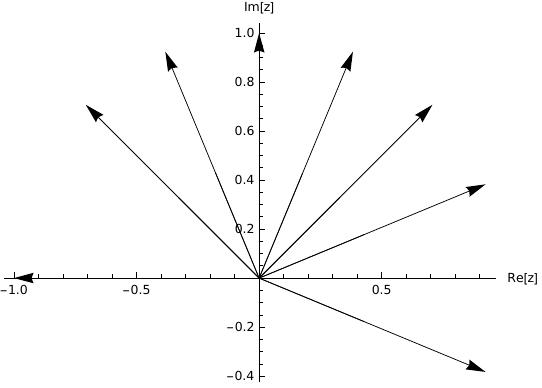}
	        \smallskip
	        \textbf{(b)} Unit vectors contributing to $v^0_m(\bm b)$ for $m=8$ measurements with $\norm{v_m(\bm b)}_\infty=\frac{1}{\sin{\frac{\pi}{2m}}}-4\sin(\frac{\pi}{2m})$.
	        \label{fig:unit_vec_2}
	    \end{minipage}
	    \caption{Comparison of the unit vector configurations.}
    \end{figure}
    
    Lastly, we consider the case of $\norm{\bm v_m(\bm b)}_\infty<\frac{1}{\sin{\frac{\pi}{2m}}}$. This appears if there is no half-plane containing all $m$ unit vectors. First, we show that at most $m-2$ unit vectors lying in every half-plane for every component $v^k_m(\bm b)$ cannot lead to a maximal $\norm{\bm v_m(\bm b)}_\infty$. Let $w$ be a complex number defined by one of the unit vectors in the sum of the component $v^k_m(\bm b)= \sum_{l=0}^{m-1}  b^{(j)}_{l} e^{i\pi (2k+1)\frac{l}{m}}$ not being in the half-plane
    \begin{equation}
    	H=\{x\in\mathbb{C}\; \vert\;\Re(x)\Re(v_m^k(\bm b))+\Im(x)\Im(v_m^k(\bm b))\geq0\}.
    \end{equation}
    Flipping the sign of $w$, i.e.\ adding $-2w$, increases the absolute value of the component $v_m^k(\bm b)$:
    \begin{equation}\label{eq:vector_sign_flip}
    	\abs{v_m^k(\bm b)-2w}^2= \abs{v_m^k(\bm b)}^2+4\abs{w}^2-4\underbrace{\left(\Re(v_m^k(\bm b)\Re(w)) +\Im(v_m^k(\bm b)\Im(w))\right)}_{<0}>\abs{v_m^k(\bm b)}^2.
    \end{equation}
    and there are at most $m-1$ vectors in every possible half-plane. Thus, the optimal choice in this case has to be one vector being flipped starting from all vectors being in one half-plane. If one of the outermost vectors is flipped, all vectors are in one half-plane again. Thus, we need to flip an inner vector. According to \cref{eq:vector_sign_flip}, the optimum is achieved for the minimal value of $\Re(v_m^k(\bm b)\Re(w))+\Im(v_m^k(\bm b)\Im(w))$, i.e.\ for a second-outermost vector being flipped. A quick calculation verifies that this leads to the norm
    \begin{align}
    	\norm{v_m(\bm b)-2w}_\infty=\frac{1}{\sin(\frac{\pi}{2m})}-4\sin(\frac{\pi}{2m}).
    \end{align}
    marking an upper bound for all strategies $\bm b$ with $\norm{\bm v_m(\bm b)}_\infty<\frac{1}{\sin(\frac{\pi}{2m})}$.

\end{proof}
Now, we are able to prove the local bounds of the inequalities. The next two propositions analyze the cases for even and odd $n$ separately.

\begin{prop}\label{prop:bound_even}
    Given any number of measurement settings $m=2^x$, the local bound is
    \begin{equation}\label{eq:opt_even}
        C_{m,n}=\frac 2{m^n}\sum_{j\in[\frac m2]}\sin[-n](\frac{(2j+1)\pi}{2m})
    \end{equation}
    for every even number of parties $n \geq n_m^{\text{even}}$ where
    \begin{equation}
        n_m^{\text{even}}= \left\lceil2\left(1+\frac{\log(\sqrt2)-\log(m\sin(\frac\pi{2m}))}{\log(1-4\sin[2](\frac{\pi}{2m}))}\right)\right\rceil.
    \end{equation}
\end{prop}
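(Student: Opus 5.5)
The plan is to reduce everything to the single-party norm maximization in \Cref{cor:lb_norm}. Since $n$ is even, that corollary gives $C_{m,n}=m^{-n}\max_{\bm a}\norm{\bm v_m(\bm a)}_n^n$. So it suffices to show two things for $n\ge n_m^{\text{even}}$: the maximum is attained by a strategy that maximizes the $\infty$-norm, and the resulting value equals the claimed sum.

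\emph{Step 1 (value at the $\infty$-norm maximizers).} By \Cref{lem:max_norm}, every strategy $\bm a$ with $\norm{\bm v_m(\bm a)}_\infty=1/\sin(\frac{\pi}{2m})$ has
\begin{equation*}
\norm{\bm v_m(\bm a)}_n^n=\sum_{l=0}^{m-1}\sin^{-n}\!\left(\tfrac{(2l+1)\pi}{2m}\right).
\end{equation*}
The angles for $l$ and $m-1-l$ are supplementary, so their sines coincide. Pairing these terms gives $2\sum_{j\in[\frac m2]}\sin^{-n}(\frac{(2j+1)\pi}{2m})$, which is the expression in \cref{eq:opt_even} after dividing by $m^n$. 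The pairing also gives the crude lower bound $\norm{\bm v_m(\bm a)}_n^n\ge 2\sin^{-n}(\frac{\pi}{2m})$, coming from the terms $l=0$ and $l=m-1$.

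\emph{Step 2 (all other strategies are worse).} Let $\bm b$ be any strategy that does not maximize the $\infty$-norm, and write $s=\sin(\frac{\pi}{2m})$. I would use the elementary interpolation $\norm{\bm v}_n^n\le\norm{\bm v}_\infty^{n-2}\norm{\bm v}_2^2$. By \Cref{lem:2norm}, $\norm{\bm v_m(\bm b)}_2^2=m^2$. By the gap statement in \Cref{lem:max_norm}, $\norm{\bm v_m(\bm b)}_\infty\le 1/s-4s$. Together these give
\begin{equation*}
\norm{\bm v_m(\bm b)}_n^n\le m^2\,(1/s-4s)^{n-2}=m^2s^2(1-4s^2)^{n-2}\,s^{-n}.
\end{equation*}
It then suffices to have $m^2s^2(1-4s^2)^{n-2}\le 2$, so that this bound does not exceed the lower bound $2s^{-n}$ from Step 1.

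\emph{Step 3 (solve for $n$).} Take logarithms of $m^2s^2(1-4s^2)^{n-2}\le 2$, noting that $\log(1-4s^2)<0$, so dividing by it reverses the inequality. The condition becomes
\begin{equation*}
n\ge 2+\frac{2\left[\log\sqrt2-\log(ms)\right]}{\log(1-4s^2)}.
\end{equation*}
This is exactly the condition $n\ge n_m^{\text{even}}$ once the ceiling is taken. For such $n$, the maximum of the $n$-norm is therefore attained on the $\infty$-norm maximizers, and \Cref{cor:lb_norm} yields \cref{eq:opt_even}.

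The main obstacle is not this chain, which is short, but the input it relies on: the gap estimate $\norm{\bm v_m(\bm b)}_\infty\le 1/s-4s$ from \Cref{lem:max_norm}. I take that lemma as given. I would also sanity-check that $1-4s^2>0$, which holds for $m\ge4$. For $m=2$, $1-4s^2<0$, so the logarithm and the formula for $n_m^{\text{even}}$ break down. In that case the statement should be read as the known Mermin bound: all strategies give $\norm{\bm v_2(\bm a)}_n^n=2\cdot 2^{n/2}$, so the sum formula holds directly.
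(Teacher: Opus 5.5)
Your proof is correct and follows the paper's argument: the same $2$--$\infty$ norm interpolation, using Parseval (\Cref{lem:2norm}) and the $\infty$-norm gap from \Cref{lem:max_norm}. The only difference is bookkeeping: you work with the full vector $\bm v_m$ against the lower bound $2\sin^{-n}(\frac{\pi}{2m})$, whereas the paper uses the half-vector $\bm v_m^+$ (with $\norm{\bm v_m^+}_2^2=m^2/2$) against $\sin^{-n}(\frac{\pi}{2m})$, and both give exactly the same threshold $n_m^{\text{even}}$. Your separate treatment of $m=2$, where $\log(1-4\sin^2\frac{\pi}{2m})$ is undefined, covers a case the paper leaves implicit; the paper's direct check of $m=4$ is also already covered by your chain, since the gap condition is vacuous there.
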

\begin{proof}
    Using \Cref{cor:lb_norm} and the fact that $v^j_m(\bm a)=v^{m-1-j}_m(\bm a)^\ast$, we can rewrite the local bound for even $n$ as
    \begin{equation}
        C_{m,n}=\frac2{m^n} \max_{\bm a}\norm{\bm v^+_{m}(\bm a)}_n^n.
    \end{equation}
    with $\bm v^+_{m}(\bm a)$ being the vector containing the first $\frac m2$ components of $\bm v_{m}(\bm a)$.
    As discussed in \Cref{lem:max_norm}, a strategy $\bm a$ with $\norm{\bm v_m(\bm a)}_\infty=\frac{1}{\sin(\frac{\pi}{2m})}$ leads to $\norm{\bm v^+_{m}(\bm a)}_n^n=\sum_{j\in[\frac m2]}\sin[-n](\frac{(2j+1)\pi}{2m})$.
    We are now going to show that this is optimal.

    \medskip
    For $m=4$, it can directly be checked that all strategies $\bm a\in\{\pm1\}^m$ have maximal $\norm{\bm v_m(\bm a)}_\infty=\frac{1}{\sin(\frac{\pi}{2m})}$, thus also maximal $\norm{\bm v_m(\bm a)}_n$.
    This proves the claim for all even $n$.
    Hence assume $m\geq 8$ from now on.
    We can upper-bound the $n$-norm of a vector $\bm k$ by interpolating between $n=2$ and $n=\infty$ as 
    \begin{equation}
        \norm{\bm k}_n\leq \norm{\bm k}_2^{\frac2n}\norm{\bm k}_\infty^{1-\frac2n}.
    \end{equation}
    From \Cref{lem:2norm} we know that $\lVert \bm v_m(\bm a) \rVert_2 = m$ for every vector $\bm v_m(\bm a)$, and from \Cref{lem:max_norm} we know that if $\norm{\bm v_m(\bm b)}_\infty$ is sub-optimal, then $\norm{\bm v_m(\bm b)}_\infty \leq \frac{1}{\sin(\frac{\pi}{2m})} - 4 \sin \left( \frac{ \pi }{2m} \right)$. This also holds for $\bm v^+_m(\bm a)$ and by substituting into the norm interpolation equation, we get
    \begin{equation}\label{eq:opt_n_norm}
        \norm{\bm v_m^+(\bm b)}_n\leq \left(\frac{m}{\sqrt2}\right)^{\frac{2}{n}}\left(\frac{1}{\sin(\frac{\pi}{2m})}-4\sin(\frac{\pi}{2m})\right)^{1-\frac{2}{n}}.
    \end{equation}
    Direct calculation shows that this is less than $\frac{1}{\sin(\frac{\pi}{2m})}$ for all $n \geq n_m^{\text{even}}$ with
    \begin{equation}
        n_m^{\text{even}}= \left\lceil2\left(1+\frac{\log(\sqrt2)-\log(m\sin(\frac\pi{2m}))}{\log(1-4\sin[2](\frac{\pi}{2m}))}\right)\right\rceil.
    \end{equation}
    Therefore, for $ n \ge n_m^{\text{even}}$, if $\bm b$ is a sub-optimal strategy in $\infty$-norm, it has $\lVert \bm v_m^+(\bm b)\rVert_n < \frac{1}{\sin(\pi/2m)}$. On the other hand, if $\bm a$ is optimal in the $\infty$-norm, by monotonicity of the norm we get $\lVert \bm v_m^+(\bm a) \rVert_n \ge \lVert\bm  v_m^+(\bm a) \rVert_\infty = \frac{1}{\sin(\pi/2m)}$.
    Altogether, this proves no sub-optimal $\infty$-norm strategy can maximize the $n$-norm, and any $n$-norm maximizer also maximizes the $\infty$-norm. Thus,
    \begin{equation}
        C_{m,n}=\frac 1{m^n}\max_{\bm a} \norm{\bm v_m(\bm a)}_n^n=\frac 2{m^n}\max_{\bm a} \norm{\bm v^+_m(\bm a)}_n^n=\frac 2{m^n}\sum_{j\in[\frac m2]}\sin[-n](\frac{(2j+1)\pi}{2m}), \quad \forall n \geq n_m^{\text{even}} .
    \end{equation}
\end{proof}

\begin{prop}\label{prop:bound_odd}
	Given any number of measurement settings $m=2^x$, the local bound is
	\begin{equation}
		C_{m,n}=\frac 2{m^n}\sum_{j\in[\frac m2]}(-1)^j\cos(\frac{(2j+1)\pi}{2m})\sin[-n](\frac{(2j+1)\pi}{2m}).
	\end{equation}
	for all odd $n\geq n_m^{\text{odd}}=\max\{n_m^{\text{even}}+1,n_m^{\text{odd},1},n_m^{\text{odd},2}\}$ with
	\begin{equation}
		n_m^{\text{odd},1}=\left\lceil\frac{\log(16)-\log(5m^2)}{\log(\tan(\frac{\pi}{8}))}\right\rceil, \qquad n_m^{\text{odd},2}=\left\lceil\frac{\log(1- \cos(\frac{\pi}{2m}))- \log(1 +\frac54 \cos(\frac{\pi}{2m}))}{\log(\tan(\frac{\pi}{8}))}\right\rceil.
	\end{equation}
\end{prop}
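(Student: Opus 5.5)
For odd $n$ the conjugate-pairing trick of \Cref{cor:lb_norm} is unavailable: the product $\prod_j v_m^k(\bm a^{(j)})$ cannot be made real and nonnegative for every $k$ simultaneously. So the bound in \cref{eq:norm_bound} is no longer tight, and we have to track phases. The plan has three steps. We fix the candidate optimal strategy, compute its value, and show that every competitor is strictly worse for $n\ge n_m^{\text{odd}}$.

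\emph{Candidate strategy.} We use the pairing of \Cref{cor:lb_norm} on $n-1$ parties, which is even. These parties alternate between a half-plane strategy $\bm a$ of \Cref{lem:max_norm} with threshold $r$ and its conjugate $\tilde{\bm a}$. This turns each factor into $\abs{v_m^k(\bm a)}^{n-1}=\sin^{-(n-1)}((2k+1)\pi/2m)$. The last party then contributes $v_m^k(\bm a)$ itself, whose phase is read off from \cref{eq:max_comp} as $e^{i\frac{\pi}{2m}(m+(2r-1)(2k+1))}$. Combining the $k$ and $m-1-k$ terms via \cref{eq:inequality-short2} gives
\[
\langle\MM_{m,n}\rangle=\frac{2}{m^n}\sum_{k\in[\frac m2]}\cos\theta_k\,\sin^{-n}\!\Big(\tfrac{(2k+1)\pi}{2m}\Big),
\qquad \theta_k=\tfrac{\pi}{2m}\big(m+(2r-1)(2k+1)\big).
\]
Choosing $r$ (or a global sign flip) so that $\theta_0\equiv\pi/2m$, i.e.\ so that the dominant $k=0$ term has phase $\cos(\pi/2m)$, we check that $\cos\theta_k=(-1)^k\cos\frac{(2k+1)\pi}{2m}$. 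This is a short computation mod $2\pi$ using $m=2^x$, and it yields the claimed value as a lower bound on $C_{m,n}$.

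\emph{Upper bound, split into two cases.} A general deterministic strategy $\bm a^{(1)},\dots,\bm a^{(n)}$ has value $\frac{2}{m^n}\operatorname{Re}\sum_{k<m/2}\prod_j v_m^k(\bm a^{(j)})$.

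In case (i), some party has $\norm{\bm v_m(\bm a^{(j)})}_\infty$ suboptimal. Then the generalized Hölder bound of \cref{eq:norm_bound}, combined with the interpolation estimate \cref{eq:opt_n_norm}, bounds the value by roughly $(m/\sqrt2)^{2}\,(\sin^{-1}(\pi/2m)-4\sin(\pi/2m))^{n-2}$. This is smaller than $\cos(\pi/2m)\sin^{-n}(\pi/2m)$ minus the tail of the remaining terms. The condition $n\ge n_m^{\text{even}}+1$, together with $n_m^{\text{odd},2}$ (which compares the loss factor against $1-\cos(\pi/2m)$ after absorbing the tail), is what should make this work.

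In case (ii), all parties use half-plane strategies with thresholds $r_j$. Each factor then has modulus $\sin^{-1}((2k+1)\pi/2m)$, and the total phase is $\Theta_k=\frac{\pi}{2m}\big(nm+(2R-n)(2k+1)\big)$ with $R=\sum_j r_j$, up to global signs. The problem therefore reduces to a one-parameter optimization over the integer $R$ (mod $2m$, with signs) of $\sum_k\cos\Theta_k\sin^{-n}((2k+1)\pi/2m)$. Because $n$ is odd, $2R-n$ is odd, so $\cos\Theta_0\le\cos(\pi/2m)$. Equality holds exactly for the candidate above, and the next admissible phase gives at most $\cos(3\pi/2m)$.

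\emph{Main obstacle.} The hard part is to show that no other $R$ can compensate through the $k\ge1$ terms. These terms are suppressed relative to $k=0$ by factors of at least $(\sin(\pi/2m)/\sin(3\pi/2m))^n\lesssim\tan^n(\pi/8)$-type ratios. I would bound the entire tail crudely by $(m/2)\cdot\tan^n(\pi/8)\sin^{-n}(\pi/2m)$ and compare it with the gap $\cos(\pi/2m)-\cos(3\pi/2m)$. This is presumably the origin of $n_m^{\text{odd},1}$, with its $\log(16/5m^2)/\log\tan(\pi/8)$ form. Getting the constants to match the stated thresholds exactly is the step I expect to need the most care.
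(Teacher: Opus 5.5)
\textbf{Genuine gap: your case split is not exhaustive.} Your candidate strategy matches the paper, and so does your treatment of the aligned case (the phase reduction to a single integer, with the next admissible phase giving $\cos(3\pi/2m)$).

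The complement of case (i) is only that every party attains $\norm{\bm v_m(\bm a^{(j)})}_\infty=1/\sin(\frac{\pi}{2m})$. \Cref{lem:max_norm} describes such a strategy as a threshold strategy only after a per-party WLOG. The maximal modulus can sit at any component $k$ (together with $m-1-k$), because for odd $2k+1$ the phases $e^{i\pi(2k+1)l/m}$ are a signed permutation of $e^{i\pi l/m}$. So different parties can peak at different components.

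A concrete example exists already for $m=4$. The strategy $\bm a=(1,1,-1,1)$ has $\abs{v_4^1(\bm a)}=1/\sin(\frac{\pi}{8})$ but $\abs{v_4^0(\bm a)}=1/\sin(\frac{3\pi}{8})$. It is not of the form $a_l=-1$ for $l<r$, $a_l=1$ for $l\ge r$, even up to a global sign. For such misaligned configurations your identity $\prod_j\abs{v_m^k(\bm a^{(j)})}=\sin^{-n}(\frac{(2k+1)\pi}{2m})$ is false, and the reduction to the single integer $R$ does not apply. A common relabeling of the settings permutes components identically for all parties, so it cannot align parties whose peaks differ.

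The paper treats this as a separate case, \cref{eq:ub_big_mixed}. Suppose party $1$ peaks at $k=0$ and some party $j$ does not. Then $\abs{v_m^0(\bm a^{(j)})}\le 1/\sin(\frac{3\pi}{2m})\le\tan(\frac{\pi}{8})/\sin(\frac{\pi}{2m})$. The paper bounds party $1$'s factor by $\bigl(\frac{1}{\sin(\pi/2m)}-\frac{1}{\sin(3\pi/2m)}\bigr)\delta_{k,0}+\frac{1}{\sin(3\pi/2m)}$ and applies H\"older to the other $n-1$ parties. This gives $\frac{1}{\sqrt2\,m^n\sin(\pi/2m)}\max_{\bm b}\norm{\bm v_m(\bm b)}_{n-1}^{n-1}$. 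That is below the case-(i) bound, since $1/\sqrt2<\cos^2(\frac{\pi}{2m})$ for $m\ge4$, so the same $n_m^{\text{odd},2}$ comparison disposes of it.

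\textbf{Case (i) estimate.} The quantity $(m/\sqrt2)^2\bigl(\frac{1}{\sin(\pi/2m)}-4\sin(\frac{\pi}{2m})\bigr)^{n-2}$ is the interpolation bound \cref{eq:opt_n_norm} on $\norm{\bm v^+_m(\bm b)}_n^n$ for one suboptimal vector. That is what you would get if all $n$ parties were suboptimal, so it does not bound the value when only one party deviates. H\"older with $n$-norms on every party gives $\frac{1}{m^n}\norm{\bm v_m(\bm a^{(i)})}_n\max_{\bm b}\norm{\bm v_m(\bm b)}_n^{n-1}$. This suffices for large $n$, but it yields a threshold different from the stated $n_m^{\text{odd}}$.

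The paper instead uses the $\infty$-norm for the deviant party and $(n-1)$-norms for the remaining even number of parties, \cref{eq:ub_mixed}. This gives $\frac{1}{m^n}\frac{\cos^2(\pi/2m)}{\sin(\pi/2m)}\max_{\bm b}\norm{\bm v_m(\bm b)}_{n-1}^{n-1}$. Evaluating that maximum with the even-$n$ result is exactly where $n\ge n_m^{\text{even}}+1$ enters. Comparing it with the candidate lower bound $\frac{2}{m^n}\cos(\frac{\pi}{2m})\sin^{-n}(\frac{\pi}{2m})(1-\tan^n\frac{\pi}{8})$ produces $n_m^{\text{odd},2}$.

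\textbf{Threshold $n_m^{\text{odd},1}$.} Your uniform tail count $(m/2)\tan^n(\frac{\pi}{8})$ would not give the stated threshold. The paper uses $\sin^{-1}(\frac{(2k+1)\pi}{2m})\le\frac{3\tan(\pi/8)}{(2k+1)\sin(\pi/2m)}$ plus an integral comparison to bound the tail by $\frac{5}{2m^n}\tan^n(\frac{\pi}{8})\sin^{-n}(\frac{\pi}{2m})$. It compares this with the gap $\frac{2}{m^n}\bigl(\cos\frac{\pi}{2m}-\cos\frac{3\pi}{2m}\bigr)\sin^{-n}(\frac{\pi}{2m})\ge\frac{8}{m^{n+2}}\sin^{-n}(\frac{\pi}{2m})$, which is where $n_m^{\text{odd},1}$ comes from.
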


\begin{proof}
	This proof assumes that $n>n_m^{\text{even}}$. First, we bound the expectation value of the inequality for a set of strategies $\{\bm a^{(i)}\}$ where at least one party $i$ chooses a strategy $\bm a^{(i)}$ with $\norm{\bm v_m(\bm a^{(i)})}_\infty<\frac{1}{\sin(\frac{\pi}{2m})}$. Then,
		\begin{equation}\label{eq:ub_mixed}
			\begin{aligned}
				\expval{\MM_{m,n}}_{\bm a^{(1)},\ldots, \bm a^{(n)}}
				&\leq \frac1{m^n} \norm{\bm v_m(\bm a^{(i)})}_\infty\max_{\bm b}\norm{\bm v_m(\bm b)}_{n-1}^{n-1} \\
				&\leq \frac1{m^n} \left(\frac{1}{\sin(\frac{\pi}{2m})}-4\sin(\frac{\pi}{2m})\right)\max_{\bm b}\norm{\bm v_m(\bm b)}_{n-1}^{n-1} \\
				&\leq \frac1{m^n} \frac{\cos[2](\frac{\pi}{2m})}{\sin(\frac{\pi}{2m})}\max_{\bm b}\norm{\bm v_m(\bm b)}_{n-1}^{n-1}.
			\end{aligned}
		\end{equation}
	Here the last inequality follows because taking the same strategy for every party, up to complex conjugation, is optimal in the case of an even number of parties.
	
	If every party chooses a strategy $\bm a^{(i)}$ with $\norm{\bm v_m(\bm a^{(i)})}_\infty=\frac{1}{\sin(\frac{\pi}{2m})}$, but the positions of the entries with the greatest absolute value do not coincide, we can upper-bound the expectation value as follows. Without loss of generality we assume that $\abs{v^{0}_m(\bm a^{(1)})}=\frac{1}{\sin(\frac{\pi}{2m})}$ and thus there is a party $j$ with  $\abs{v^0_m(\bm a^{(j)})}\leq\frac{1}{\sin(\frac{3\pi}{2m})}\leq \tan(\frac{\pi}{8})\frac{1}{\sin(\frac{\pi}{2m})}\leq\frac12\frac{1}{\sin(\frac{\pi}{2m})}$ for $m\geq4$. Then,
	\begin{equation}\label{eq:ub_big_mixed}
		\begin{aligned}
			\expval{\MM_{m,n}}_{\bm a^{(1)},\ldots, \bm a^{(n)}}&=\frac2{m^n}\sum_{k=0}^{\frac{m}{2}-1}\Re\left[v_m^{k}(\bm a^{(1)})\prod_{j=2}^n v_m^{k}(\bm a^{(j)})\right]\\
            &\leq \frac2{m^n}\sum_{k=0}^{\frac{m}{2}-1}\abs{v_m^{k}(\bm a^{(1)})}\prod_{j=2}^n \abs{v_m^{k}(\bm a^{(j)})}\\
			& =\frac2{m^n}\left(\abs{v_m^{0}(\bm a^{(1)})}\prod_{j=2}^n \abs{v_m^{0}(\bm a^{(j)})} +\sum_{k=1}^{\frac{m}{2}-1}\abs{v_m^{k}(\bm a^{(1)})}\prod_{j=2}^n \abs{v_m^{k}(\bm a^{(j)})}\right)\\
			& \leq\frac2{m^n}\left(\left(\frac{1}{\sin(\frac{\pi}{2m})}-\frac{1}{\sin(\frac{3\pi}{2m})}\right)\frac12\prod_{j=2}^n \frac{1}{\sin(\frac{\pi}{2m})} +\frac{1}{\sin(\frac{3\pi}{2m})}\prod_{j=2}^n \norm{\bm v_m^+(\bm a^{(j)})}_n\right)\\
			  &\leq \frac2{m^n} \left(\left( \frac{1}{\sin(\frac{\pi}{2m})}-\frac{1}{\sin(\frac{3\pi}{2m})}\right)\frac12 \max_{\bm b}\norm{\bm v_m^+(\bm b)}_{n-1}^{n-1}+\frac{1}{\sin(\frac{3\pi}{2m})}\max_{\bm b}\norm{\bm v_m^+(\bm b)}_{n-1}^{n-1}\right)\\
			&=\frac1{2m^n} \left( \frac{1}{\sin(\frac{\pi}{2m})}+\frac{1}{\sin(\frac{3\pi}{2m})}\right)\max_{\bm b}\norm{\bm v_m(\bm b)}_{n-1}^{n-1} \\
            &\leq \frac1{m^n} \frac{1}{\sqrt2} \frac{1}{\sin(\frac{\pi}{2m})}\max_{\bm b}\norm{\bm v_m(\bm b)}_{n-1}^{n-1}.
		\end{aligned}
	\end{equation}
	
	It is better to apply strategies with $\norm{\bm v_m(\bm a^{(i)})}_\infty=\frac{1}{\sin(\frac{\pi}{2m})}$ with the absolute values of the components of $\bm v_m(\bm a^{(i)})$ being in the same order. We know the components of each vector $\bm v_m(\bm a^{(i)})$ from \cref{eq:max_comp} and the phases in the product are from the same set as the phases of the single strategies:
	\begin{equation}
	\begin{aligned}
		\expval{\MM_{m,n}}_{\bm a^{(1)},\ldots,\bm a^{(n)}}&=\frac2{m^n} \sum_{k=0}^{\frac m2 -1}\frac{\cos(\frac{n\pi}{2}+(2\sum_{i=1}^n r^i-n)(2k+1)\frac{\pi}{2m})}{\sin[n]((2k+1)\frac{\pi}{2m})}\\
		&=\frac2{m^n} \sum_{k=0}^{\frac m2 -1}\frac{\cos(\frac{\pi}{2}+(2r^\prime-1)(2k+1)\frac{\pi}{2m})}{\sin[n]((2k+1)\frac{\pi}{2m})}
	\end{aligned}
	\end{equation}
	with
	\begin{equation}
		r^\prime=\left[\sum_{i=1}^{n}r^i+\frac{n-1}{2}(m-1)\right]_{2m}.
	\end{equation}
	The term with $k=0$ is dominant and we can show that it is optimal to minimize the phase of the cosine with $r^\prime\equiv-\frac m2\pmod{2m}$, because the other terms are exponentially smaller for growing $n$:
	\begin{equation}\label{eq:other_terms}
	\begin{aligned}
		\frac2{m^n}\sum_{k=1}^{\frac m2 -1}\frac{\cos(\frac{\pi}{2}+(2r^\prime-1)(2k+1)\frac{\pi}{2m})}{\sin[n]((2k+1)\frac{\pi}{2m})}
		&\leq\frac2{m^n}\sum_{k=1}^{\frac m2 -1}\frac{1}{\sin[n]((2k+1)\frac{\pi}{2m})} \\
		&\leq\frac2{m^n}\sum_{k=1}^{\frac m2 -1}\left(\frac{1}{\sin(\frac{\pi}{2m})}\frac{3\gamma}{2k+1}\right)^n\\
		&\leq\frac2{m^n}\frac{(3\gamma)^n}{\sin[n](\frac{\pi}{2m})} \left(\frac1{3^n}+\int_{1}^{\frac m2 -1}\frac1{(2k+1)^n}\dd k\right)\\
		&=\frac2{m^n}\frac{(3\gamma)^n}{\sin[n](\frac{\pi}{2m})} \left(\frac1{3^n}+\frac1{2(n-1)}\left(\frac{1}{3^n}-\frac{1}{(m-1)^n}\right)\right)\\
		&\leq\frac2{m^n} \frac{\gamma^n}{\sin[n](\frac{\pi}{2m})} \left(1+\frac1{2(n-1)}\right)\leq\frac5{2m^n}\frac{\gamma^n}{\sin[n](\frac{\pi}{2m})} ,
	\end{aligned}
	\end{equation}
	with $\gamma=\tan(\frac{\pi}{8})$ and the last inequality holding for $n\geq3$. A suboptimal choice of the cosine's phase for the $k=0$ term leads to a difference of at least
	\begin{equation}
		\frac2{m^n}\left(\frac{\cos(\frac{\pi}{2m})}{\sin[n](\frac{\pi}{2m})}-\frac{\cos(\frac{3\pi}{2m})}{\sin[n](\frac{\pi}{2m})}\right)=\frac2{m^n}\frac{2}{\sin[n](\frac{\pi}{2m})}\sin(\frac{\pi}{2m})\sin(\frac{\pi}{m})\geq \frac2{m^n}\frac{2}{\sin[n](\frac{\pi}{2m})}\frac1m\frac{2}{m}= \frac{8}{m^{n+2}}\frac{1}{\sin[n](\frac{\pi}{2m})}
	\end{equation}
	in this term. This is greater than the contribution of all other terms bounded in \cref{eq:other_terms} for all sufficiently large $n$:
	\begin{equation}
	\begin{aligned}
		\frac5{2m^n}\frac{\gamma^n}{\sin[n](\frac{\pi}{2m})}&\leq\frac{8}{m^{n+2}}\frac{1}{\sin[n](\frac{\pi}{2m})} \\
		\Longleftrightarrow\quad  \gamma^n&\leq \frac{16}{5m^2}\\
		\Longleftrightarrow n\log(\gamma)&\leq \log(16)-\log(5m^2)\\
		\Longleftrightarrow n &\geq \frac{\log(16)-\log(5m^2)}{\log(\gamma)},
	\end{aligned}
	\end{equation}
	i.e. the choice $r^\prime\equiv-\frac{m}{2}\pmod{2m}$ is optimal for all $n\geq n_m^{\text{odd},1}=\left\lceil\frac{\log(16)-\log(5m^2)}{\log(\gamma)}\right\rceil$. This directly leads to
	\begin{equation}\label{eq:odd_opt_candidate}
		\expval{\MM_{m,n}}_{\bm a^{(1)},\ldots,\bm a^{(n)}}=\frac 2{m^n}\sum_{j\in[\frac m2]}(-1)^j\cos(\frac{(2j+1)\pi}{2m})\sin[-n](\frac{(2j+1)\pi}{2m})
	\end{equation}
	being the optimal value for strategies with $\norm{\bm v_m(\bm a^{(i)})}_\infty=\frac{1}{\sin(\frac{\pi}{2m})}$ and coinciding positions of the greatest entries for all $n\geq n_m^{\text{odd},1}$. This can be lower-bounded as
	\begin{equation}\label{eq:lb_best_strat}
		\expval{\MM_{m,n}}_{\bm a^{(1)},\ldots, \bm a^{(n)}}\geq \frac2{m^n} \left(\frac{\cos(\frac{\pi}{2m})}{\sin[n](\frac{\pi}{2m})}-\frac{\cos(\frac{3\pi}{2m})}{\sin[n](\frac{3\pi}{2m})}\right)\geq \frac2{m^n} \frac{\cos(\frac{\pi}{2m})}{\sin[n](\frac{\pi}{2m})}\left(1-\gamma^n\right).
	\end{equation}
	We compare this to the upper bounds derived for the other cases in \cref{eq:ub_mixed,eq:ub_big_mixed}. The upper bound in \cref{eq:ub_mixed} is greater than the one in \cref{eq:ub_big_mixed} and we use the inequality $\max_{\bm a}\norm{\bm v_m(\bm a)}_n^n\leq 2\left(1+\frac54 \gamma^n\right)\frac{1}{\sin^n(\frac{\pi}{2m})}$ --- which can be derived with \cref{eq:other_terms} --- to further bound \cref{eq:ub_mixed}. Then we compare this to the lower bound in \cref{eq:lb_best_strat}:
	\begin{equation}
	\begin{aligned}
		\frac2{m^n} \frac{\cos(\frac{\pi}{2m})}{\sin[n](\frac{\pi}{2m})}\left(1-\gamma^n\right)&\geq\frac2{m^n} \cos[2](\frac{\pi}{2m})\left(1+\frac54 \gamma^n\right)\frac{1}{\sin[n](\frac{\pi}{2m})}\\
		\Longleftrightarrow 1-\gamma^n&\geq \cos(\frac{\pi}{2m})\left(1+\frac54 \gamma^n\right)\\
		\Longleftrightarrow 1- \cos(\frac{\pi}{2m}) &\geq \left(1+\frac54\cos(\frac{\pi}{2m})\right)\gamma^n\\
		\Longleftrightarrow \log(1- \cos(\frac{\pi}{2m}))- \log(1 +\frac54 \cos(\frac{\pi}{2m}))&\geq n\log(\gamma)\\
		\Longleftrightarrow n &\geq \frac{\log(1- \cos(\frac{\pi}{2m}))- \log(1 +\frac54 \cos(\frac{\pi}{2m}))}{\log(\gamma)}.
	\end{aligned}
	\end{equation}
    Hence, for $n\geq n_m^{\text{odd},2}=\left\lceil\frac{\log(1- \cos(\frac{\pi}{2m}))- \log(1 +\frac54 \cos(\frac{\pi}{2m}))}{\log(\gamma)}\right\rceil$ the strategy presented in \cref{eq:odd_opt_candidate} performs better than all strategies which do not lead to a coordinate $k$ with $\abs{v^k_m(\bm a^{(i)})}=\frac{1}{\sin(\frac{\pi}{2m})}$ for all parties $i$.
	We can summarize the results as follows: the expectation value in \cref{eq:odd_opt_candidate} is the local bound for all odd $n\geq n_m^{\text{odd}}=\max\{n_m^{\text{even}} +1,n_m^{\text{odd},1},n_m^{\text{odd},2}\}$.
\end{proof}
Note that $n_m^{\text{even}}$ and $n_m^{\text{odd}}$ are not tight bounds for the given local bounds to hold. They are the result of finding an analytical solution valid for all $m=2^x$. Investigating specific numbers of measurements $m$ can improve the minimal number of parties for the bounds to hold. We conjecture that they hold for all $m=2^x$ and $n\in\mathbb{N}$ as we could not find any counterexample.

\subsubsection{Considered measurement settings}
Here, we show that the stated bounds hold for the cases considered in the paper including the bounds needed for nonlocality depth certification. They are $m=8$ for $n\in\{3,4,5,6,7,8,16,28,43,60,80\}$ as well as $m\in\{4,16,32\}$ for $n=80$. All of the cases with even $n$ are covered by \Cref{prop:bound_even} as $n_8^{\text{even}}=4$, $n_{16}^{\text{even}} = 8$ and $n_{32}^{\text{even}} = 24$. For the odd numbers of parties we have $n_8^{\text{odd},1}=4$ and $n_8^{\text{odd},2}=6$ such that $n\geq7$ is covered by \Cref{prop:bound_odd}. We verified the local bound for $m=8$ and $n=3$ numerically. We cover the case of $m=8$ and $n=5$ by explicitly computing the upper bound in Eq.~\eqref{eq:ub_mixed} and directly comparing it to the best set of local strategies with $\norm{\bm v_m(\bm a^{(i)})}_\infty=\frac{1}{\sin(\frac{\pi}{2m})}$ for all parties $i\in\{1,\ldots,5\}$.

\subsubsection{Scaling in comparison to Mermin Inequalities}

The improved scaling of the local bounds --- and hence the noise robustness --- with the number of parties $n$ is an essential feature of the generalized Mermin inequalities. The local bound of Mermin's inequalities $\MM_{2,m}$ scale as $C_{2,m}\propto(1/\sqrt2)^n\approx0.707^n$. In contrast, our inequalities achieve local bounds $C_{m,n}\propto 1/(m\sin(\frac{\pi}{2m}))^n$ in the limit of large $n$, e.g.\ for $m=8$ measurement settings $C_{8,n}\propto 0.641^n$. Taking the limit of infinitely many measurement settings, we get a scaling of $C_{m\to\infty,n}\propto(2/\pi)^n\approx 0.637^n$ also reported in \cite{sm:zukowski1993bell} for a continuous choice of measurement settings.

To the best of our knowledge, the generalized Mermin inequalities have the best noise robustness for GHZ states compared to all other known Bell inequalities in the scenarios of $m\geq16$ dichotomic measurements on an odd number of parties $n\geq n^{\text{odd}}_m$.

\subsection{Nonlocality depth}
\label{sec:nonlocality-depth}

We characterize the nonlocality depth by investigating its robustness under the addition of nonlocal resources, i.e.\ classical communication in $l$ subgroups of the $n$ parties, following the definitions in Ref.~\cite{sm:bancal2009quantifying}.
In the \emph{grouping model} (\Cref{fig:grouping-model}), every party belongs to one of the $l$ groups and arbitrary communication and collaboration is allowed in each group, but none between distinct groups.
To derive the nonlocality depth inequalities, we first consider the less restricted \emph{restrained subset model} (\Cref{fig:restrained-subset-model}), which describes a superset of the correlations possible in the grouping model, and later translate the results to the grouping model. 
In the restrained subset model, there is a subset $S$ of $l$ parties such that, after all allowed communication among the $n$ parties, the information available to any party before producing its output contains the input of at most one party from $S$.
The remaining parties may otherwise communicate and collaborate arbitrarily, subject to this restriction.
In particular, since each party in $S$ already has access to its own input, no party in $S$ can learn the input of any other party in $S$.

\smallskip
The grouping model is a subset of the restrained model in the sense that every strategy in the grouping model with $l$ groups is also a valid strategy in the restrained subset model: choosing one representative party from each group gives a subset $S$ whose inputs cannot be combined at any site, since no communication is allowed between distinct groups.

\begin{figure}[htbp]
    \centering

    \begin{minipage}[t]{0.208\linewidth}
        \centering
        \includegraphics[width=\linewidth]{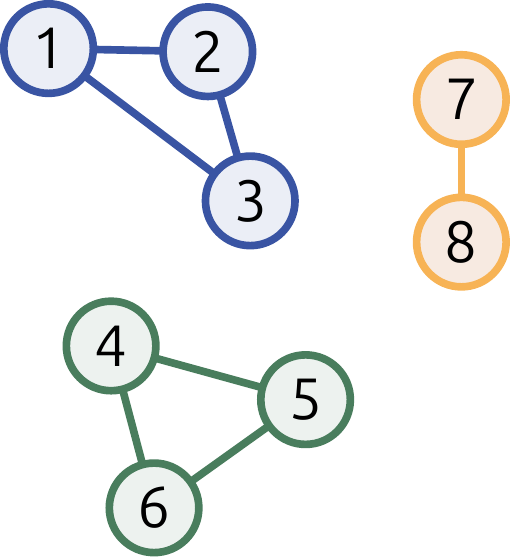}
        \smallskip
        \textbf{(a)} Grouping model
        \label{fig:grouping-model}
    \end{minipage}
    \hspace{4.5em}
    \begin{minipage}[t]{0.23\linewidth}
        \centering
        \includegraphics[width=\linewidth]{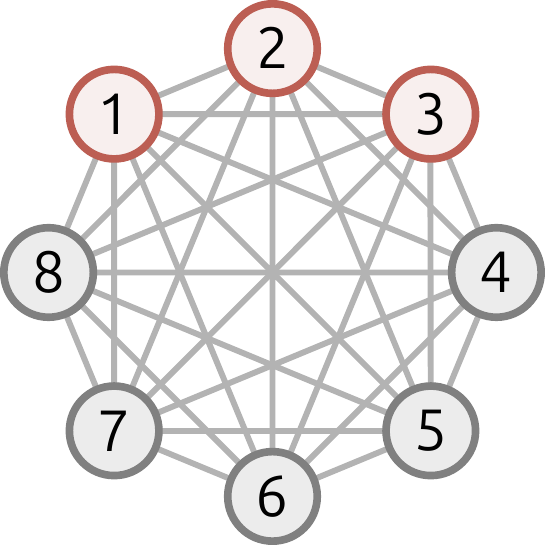}
        \smallskip
        \textbf{(b)} Restrained subset model
        \label{fig:restrained-subset-model}
    \end{minipage}

    \caption{
    Schematic representation of the models used to certify nonlocality depth.
    (a) In the grouping model, the $n$ parties are partitioned into groups. Parties within the same group may communicate and collaborate arbitrarily, while no communication is allowed between distinct groups.
    (b) In the restrained subset model, a subset $S$ of $l$ parties (highlighted in red) is chosen such that, after all allowed communication, no party has access to more than one input from $S$. The remaining parties may otherwise communicate freely, subject to this restriction.
    \Cref{prop:nonlocal_depth} and \Cref{cor:nonlocal_depth} show that correlations generated by such a grouping model satisfy $\langle \mathcal M_{m,n}\rangle \le C_{m,\lceil n/p\rceil}$. Observing a violation of this bound rules out all models with group size at most $p$, therefore certifies nonlocality depth larger than $p$.}
    \label{fig:nonlocality-depth-models}
\end{figure}

\begin{prop}\label{prop:nonlocal_depth}
	The maximum achievable value for $\MM_{m,n}$ in the restrained subset model with a subset of $l$ parties is given by
	\begin{equation}
		\max_{l\text{\normalfont -subset}} \expval{\MM_{m,n}}=C_{m,l}
	\end{equation}
	This can also be achieved with a grouping model with $l$ groups.
\end{prop}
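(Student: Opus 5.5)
The plan is to prove the upper bound by a conditioning argument that collapses any restrained-subset strategy into an $l$-party local deterministic strategy for $\MM_{m,l}$. Achievability then follows from the reverse construction with $l$ groups. Since shared randomness only produces convex mixtures, it suffices throughout to consider deterministic strategies. I will use the game form of $\MM_{m,n}$:
\begin{equation}
\expval{\MM_{m,n}}=\frac{1}{m^{n-1}}\sum_{\bm x:\, s(\bm x)\equiv 0\,(\mathrm{mod}\,m)}(-1)^{s(\bm x)/m}\prod_{j=1}^n a_j(\text{info of }j).
\end{equation}
Here $a_j\in\{\pm1\}$ is party $j$'s output, which is a function of the information party $j$ holds after communication.

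\emph{Upper bound.} Label the restrained subset $S=\{1,\dots,l\}$ and fix the inputs $\bm x_{\bar S}$ of all parties outside $S$, with $t=\sum_{j\notin S}x_j$.
\begin{itemize}
\item With $\bm x_{\bar S}$ fixed, every party's output is a function of at most one $S$-input. This holds for parties in $S$ too, since the inputs they may learn from outside $S$ are now constants.
\item Partition the parties according to which $S$-input they see, or none. Define the product of the outputs of party $i\in S$ together with all parties that see $x_i$. This is a $\pm1$-valued function $\alpha_i(x_i)$, and the parties that see no $S$-input contribute a constant sign $\epsilon=\pm1$.
\item Write $t=qm+r$ with $0\le r<m$. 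Substitute $x_1'=x_1+r \bmod m$ and absorb the wrap-around sign $(-1)^{\lfloor (x_1+r)/m\rfloor}$, together with $(-1)^q$ and $\epsilon$, into a new local function $\alpha_1'(x_1')$. The constraint $\sum_{i\in S}x_i+t\equiv 0$ and the sign $(-1)^{(\sum_S x_i+t)/m}$ then become exactly the constraint and sign of $\MM_{m,l}$ in the variables $(x_1',x_2,\dots,x_l)$.
\item Under this substitution, the inner sum over the $m^{l-1}$ admissible $\bm x_S$ equals $m^{l-1}$ times the value of $\MM_{m,l}$ on the local deterministic strategy $(\alpha_1',\alpha_2,\dots,\alpha_l)$. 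It is therefore at most $m^{l-1}C_{m,l}$.
\item Averaging over the $m^{n-l}$ choices of $\bm x_{\bar S}$ restores the prefactor $1/m^{n-1}$, giving $\expval{\MM_{m,n}}\le C_{m,l}$.
\end{itemize}

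\emph{Achievability.} Partition the $n$ parties into $l$ nonempty groups $G_1,\dots,G_l$, which is possible for $l\le n$.
\begin{itemize}
\item Within group $G_i$ all inputs are shared, so the group knows $\sigma_i=\sum_{j\in G_i}x_j$. It can also make the product of its outputs an arbitrary $\pm1$ function of its inputs: all members output $+1$ except one designated member, who outputs the desired value.
\item Let $\beta_i$ be an optimal local deterministic strategy for $\MM_{m,l}$. Group $i$ makes the product of its outputs equal to $\beta_i(\sigma_i \bmod m)\,(-1)^{\lfloor \sigma_i/m\rfloor}$.
\item The sign $(-1)^{s(\bm x)/m}$ then factorizes consistently with the carries. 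Each admissible $n$-tuple maps to an admissible $l$-tuple, and each admissible $l$-tuple has exactly $m^{n-l}$ preimages.
\item Hence the value of $\MM_{m,n}$ equals the value of $\MM_{m,l}$ on $(\beta_i)$, which is $C_{m,l}$.
\end{itemize}
Since the grouping model is contained in the restrained subset model, as argued before the proposition, this shows both that the bound is tight and that a grouping model with $l$ groups attains it.

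The main obstacle is the sign and carry bookkeeping in both directions. One must verify that shifting one $S$-input by $r$ modulo $m$, and replacing group sums by residues, changes $(-1)^{s/m}$ only by factors that depend on a single effective party's input, so that they can be absorbed into that party's local function. I would check this by writing $s=m\cdot(\text{number of carries})+(\text{residue sum})$ and tracking the parity of the carries explicitly. A secondary point is making precise that, in the restrained subset model, a party's information after communication really is a function of $\bm x_{\bar S}$ and at most one $S$-input, which is exactly the model's defining restriction.
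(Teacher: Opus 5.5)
Your proposal is correct and follows essentially the same route as the paper. Conditioning on the inputs outside $S$ is exactly the paper's recursive decomposition of $\MM_{m,n}$ into signed $l$-party expressions $\pm M^k_{m,l}$, which are then bounded as an effective $l$-party local inequality; your shift $x_1\mapsto x_1+r \bmod m$ makes explicit why each $\pm M^k_{m,l}$ has local bound $C_{m,l}$, a step the paper leaves to Bancal et al. Your achievability construction, with group product $\beta_i(\sigma_i \bmod m)(-1)^{\lfloor\sigma_i/m\rfloor}$, is the paper's construction using $\tilde k_i=[\sum_{x\in G_i}k_x]_{2m}$.
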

\begin{proof}
	We recursively define the expressions
	\begin{equation}\label{eq:rec_def}
		M_{m,n}^k=\frac1m\left(\sum_{j=0}^k A_j^{(n)} M_{m,n-1}^{k-j}-\sum_{j=k+1}^{m-1} A_j^{(n)} M_{m,n-1}^{k-j+m}\right)
	\end{equation}
	with $M_{m,1}^k=A_k^{(1)}$ for all $k\in\{0,\ldots,m-1\}$ and the second sum vanishing if $k+1>m-1$. Note that for $k=0$ we recover the previously presented Generalized Mermin inequalities: $ \MM_{m,n}=M_{m,n}^0 $.

    \smallskip
	We decompose $M_{m,n}^0$ for all but the first $l$ parties, using the recursive definition in \cref{eq:rec_def}:
	\begin{equation}
		M_{m,n}^0=\frac1{m^{n-l}}\sum_{k_n,\ldots,k_{l+1}=0}^{m-1}A_{k_n}^n\ldots A_{k_{l+1}}^{l+1}\mathfrak{M}_{m,l}^{k_n,\ldots,k_{l+1}}
	\end{equation}
	with $\mathfrak{M}_{m,l}^{k_n,\ldots,k_{l+1}}\in\{\pm M_{m,l}^k\,|\, k\in\{0,\ldots,m-1\} \}$. This can be turned into an effective inequality on the first $l$ parties as presented in Ref.~\cite{sm:bancal2009quantifying} for two-setting inequalities. 
	Thus, the maximal value in the restrained subset model is upper-bounded by
	\begin{equation}\label{eq:restrained_model_bound}
		\max_{l\text{-subset}} \expval{M_{m,n}^0}\leq C_{m,l}.
	\end{equation}
	The upper bound can be reached even in a grouping model with $l$ groups $G_i$, where each of the first $l$ parties $i\in G_i$ belongs to a distinct group. The parties in each group $G_i$ share their inputs with party $i$. The parties $i\leq l$ compute the sum of the inputs $\tilde k_i=\left[\sum_{x\in G_i}k_x\right]_{2m}$ and outputs
	\begin{equation}
		a^{(i)}_{k_i}=\begin{cases}
			b^{(i)}_{[\tilde k_i]_m}, & \tilde k_i<m\\ -b^{(i)}_{[\tilde k_i]_m}, & \tilde k_i\geq m\\
		\end{cases},
	\end{equation}
	where $\{b^{(i)}\}_{i=1}^l$ is an optimal strategy for the $l$-partite expression $\MM_{m,l}$. Every other party returns the output $+1$ for every input. Then, the upper bound in \cref{eq:restrained_model_bound} is saturated, proving the claim.
\end{proof}
This bound matches the one for the Mermin-Svetlichny inequalities used in Ref.~\cite{sm:bancal2009quantifying} for $l=3$ and performs better for all $l\geq 4$ (only one exception: $l=4$ and $m=4$). Note that the bound is trivial for the cases of $l=2$.
\smallskip
As the number of groups is limited by the size of the largest group, the previous proposition also gives the bounds for \emph{$p$-producible correlations}, that is, correlations generated by groupings in which every group has size at most $p$:
\begin{cor}
	The maximal achievable expectation value of $\MM_{m,n}$ in a grouping model with maximal group size $p$ is given by
	\begin{equation}
		\max_{p- \text{prod.}} \expval{\MM_{m,n}} = C_{m,\lceil \frac n p\rceil}.
	\end{equation}
\label{cor:nonlocal_depth}
\end{cor}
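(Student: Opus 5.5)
The corollary follows from \Cref{prop:nonlocal_depth} once we show two things: an upper bound by reduction to the restrained subset model, and achievability by an explicit grouping. Both hinge on the number of groups. If every group has size at most $p$, then any partition of the $n$ parties has $l\ge\lceil n/p\rceil$ groups. Conversely, a partition into exactly $\lceil n/p\rceil$ groups of size at most $p$ always exists: fill groups greedily with $p$ parties each and put the remainder in the last one.

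For the upper bound, take any grouping-model strategy with maximal group size $p$ and $l$ groups. Picking one representative per group gives a subset $S$ of size $l$ whose inputs are never combined at any site, so the strategy is a valid restrained subset model strategy. Now observe that restricting $S$ to any subset $S'\subseteq S$ of size $l'=\lceil n/p\rceil\le l$ still satisfies the restrained-subset condition, since the condition only becomes weaker for smaller subsets. \Cref{prop:nonlocal_depth} applied with $l'$ then gives $\langle\MM_{m,n}\rangle\le C_{m,\lceil n/p\rceil}$. This sidesteps any need to know whether $C_{m,l}$ is monotone in $l$, though monotonicity also holds: $C_{m,l}\ge C_{m,l+1}$, because a party answering $+1$ on all inputs cannot increase the value. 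I would note this as a remark only.

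For achievability, take the partition into exactly $\lceil n/p\rceil$ groups of size at most $p$. The last sentence of \Cref{prop:nonlocal_depth} says that $C_{m,l}$ is attained by a grouping model with $l$ groups; its construction (inputs of a group summed modulo $2m$ at the representative, which applies the optimal $\MM_{m,l}$ strategy with sign correction, and all other members output $+1$) works for any partition. Applying it to this partition attains $C_{m,\lceil n/p\rceil}$ with groups of size at most $p$, so the maximum equals the bound.

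The main obstacle is the subset-shrinking step in the upper bound: we must confirm that the restrained-subset condition is hereditary under passing to subsets, which is immediate from its definition since ``at most one input from $S$'' implies ``at most one input from $S'$''. A secondary check is that the achieving construction in \Cref{prop:nonlocal_depth} tolerates groups of unequal sizes. It does, because the representative's summation modulo $2m$ handles any group size.
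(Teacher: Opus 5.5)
Your proof is correct and is essentially the paper's argument: there are at least $\lceil n/p\rceil$ groups, picking representatives gives a restrained subset to which \Cref{prop:nonlocal_depth} applies, and that proposition's grouping construction on an exact partition into $\lceil n/p\rceil$ groups gives achievability. Your hereditary-subset observation spells out the step the paper compresses into ``gives a restrained subset of size $\lceil n/p\rceil$''. The one blemish is the side remark on monotonicity: $C_{m,l+1}\le C_{m,l}$ is true, but it follows from your hereditary argument applied to a local model on $l+1$ parties, not from a party answering $+1$ on all inputs, which says nothing about the maximum; the proof does not use the remark anyway.
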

\begin{proof}
Any grouping with largest group size $p$ must have at least $\lceil n/p\rceil$ groups.
Choosing one representative from each of these groups gives a restrained subset of size $\lceil n/p\rceil$, therefore by \Cref{prop:nonlocal_depth} the Generalized Mermin expression is upper bounded by \(C_{m,\lceil n/p\rceil}\).
For the other direction, we can partition the $n$ parties into exactly $\lceil n/p\rceil$ groups, each of size at most $p$, and the construction in \Cref{prop:nonlocal_depth} attains this bound.
\end{proof}

\section{Additional GHZ Benchmarks}
\label{sec:si_ghz_benchmarks}

\subsection{Biseparable comparison at \texorpdfstring{$n=16$}{n=16}}

The smaller-scale GME benchmark discussed in the main text compares the Generalized Mermin correlator with the biseparable threshold
\begin{equation}
C_{\mathrm{bisep}}(m)
=
\frac{1}{m\sin(\pi/2m)} .
\end{equation}
This threshold follows from the multisetting DIEW/extended-parity-game biseparable quantum bound of P\'al and V\'ertesi~\cite{sm:pal2011multisetting}, expressed in the normalization used in the main text. In our notation, the corresponding unnormalized $n$-party expression has ideal GHZ/algebraic value
\begin{equation}
\widetilde Q_{m,n}=m^{n-1},
\end{equation}
whereas the normalized operator used throughout the main text,
\begin{equation}
\MM_{m,n}=\frac{\widetilde{\MM}_{m,n}}{m^{n-1}},
\end{equation}
has ideal value
\begin{equation}
Q_{m,n}
=
\bra{\mathrm{GHZ}_n}
\MM_{m,n}
\ket{\mathrm{GHZ}_n}
=
1 .
\end{equation}
In this normalized convention, the P\'al--V\'ertesi biseparable quantum bound gives
\begin{equation}
\langle \MM_{m,n}\rangle_{\mathrm{bisep}}
\le
C_{\mathrm{bisep}}(m)
=
\frac{1}{m\sin(\pi/2m)} .
\end{equation}
This is a Bell/DIEW-style biseparable quantum bound and does not rely on trusting the specific Pauli $X/Y$ measurement axes. It is weaker than the fixed-measurement coherence-witness bound $\langle \MM_{m,n}\rangle_{\mathrm{bisep}}\le 1/2$, but it has the appropriate biseparable-quantum Bell-benchmark interpretation used in the main text.

Using this biseparable quantum benchmark, for the $16$-qubit GHZ state we obtain the comparison summarized in Table~\ref{tab:gme16}.
The Generalized Mermin test improves the gap to the relevant biseparable threshold by almost a factor of two.
We therefore treat this comparison as a secondary GME-oriented benchmark; it is not intended as a loophole-free or fully device-independent GME claim.

\begin{table}[ht]
\centering
\begin{tabular}{lccc}
\toprule
Setting number $m$ & $\langle \MM_{m,16}\rangle_{\exp}$ & $C_{\mathrm{bisep}}(m)$ & Gap \\
\midrule
2 & 0.7446 & 0.7071 & 0.0375 \\
8 & 0.7128 & 0.6407 & 0.0721 \\
\bottomrule
\end{tabular}
\caption{Biseparable comparison for the $16$-qubit GHZ state. The generalized-Mermin test yields a noticeably larger normalized margin above the biseparable threshold.}
\label{tab:gme16}
\end{table}

\subsection{Detailed nonlocality-depth hierarchy}
Table~\ref{tab:depth_summary_detailed} gives the full grouping-model analysis used for the nonlocality-depth claims.
For each pair $(n,m)$ we list the observed correlator, the threshold for $(k-1)$-producible models, the certified depth $k$, and the settings-level EB $p$-value upper bound.
The same table also includes the fixed-size $80$-qubit scan over $m=4,16,32$.

\begin{table}[!htbp]
\centering
\begin{small}
\begin{tabular}{ccccccc}
\toprule
$n$ & $m$ & $\langle \MM_{m,n}\rangle_{\exp}$ & $C_{(k-1)\text{-prod}}$ & Depth $k$ & $N\times M$ & $\log_{10}(p_{\mathrm{EB}})$ \\
\midrule
\multicolumn{7}{l}{\textit{Standard Mermin ($m=2$)}} \\
4  & 2 & 0.9364 & 0.5000 & 2  & $400\times1500$  & -15.711 \\
8  & 2 & 0.8680 & 0.5000 & 4  & $400\times1500$  & -13.122 \\
16 & 2 & 0.7446 & 0.5000 & 8  & $500\times1500$  & -10.710 \\
28 & 2 & 0.5521 & 0.5000 & 14 & $800\times1500$  & -3.285 \\
43 & 2 & 0.3714 & 0.2500 & 11 & $1300\times1500$ & -13.502 \\
60 & 2 & 0.2297 & 0.1250 & 10 & $2200\times1500$ & -19.713 \\
80 & 2 & 0.1012 & 0.0625 & 10 & $3600\times1500$ & -11.287 \\
\midrule
\multicolumn{7}{l}{\textit{Generalized Mermin ($m=8$)}} \\
4  & 8 & 0.9437 & 0.5000 & 2  & $400\times1500$  & -16.013 \\
8  & 8 & 0.8671 & 0.5000 & 4  & $400\times1500$  & -13.142 \\
16 & 8 & 0.7128 & 0.5000 & 8  & $500\times1500$  & -9.055 \\
28 & 8 & 0.5651 & 0.5000 & 14 & $800\times1500$  & -4.247 \\
43 & 8 & 0.3708 & 0.2109 & 11 & $1300\times1500$ & -18.055 \\
60 & 8 & 0.2189 & 0.1387 & 12 & $2200\times1500$ & -14.684 \\
80 & 8 & 0.1012 & 0.0869 & 14 & $3600\times1500$ & -3.659 \\
\midrule
\multicolumn{7}{l}{\textit{Fixed-size scan for the $80$-qubit GHZ state}} \\
80 & 4  & 0.1058 & 0.0937 & 14 & $3600\times1500$ & -3.028 \\
80 & 16 & 0.1082 & 0.0853 & 14 & $3600\times1500$ & -6.429 \\
80 & 32 & 0.1094 & 0.0849 & 14 & $3600\times1500$ & -6.978 \\
\bottomrule
\end{tabular}
\end{small}

\caption{Detailed nonlocality-depth hierarchy for GHZ states.
The threshold $C_{(k-1)\text{-prod}}$ is obtained from the analytical grouping-model bound.
The last column reports settings-level empirical-Bernstein $p$-value upper bounds on a $\log_{10}$ scale. For each $(n,m)$, the reported depth $k$ is the largest depth for which
the corresponding $(k-1)$-producible null hypothesis is rejected with
\(p_{\mathrm{EB}}\le 1\times 10^{-3}\), equivalently
\(\log_{10}(p_{\mathrm{EB}})\le -3\).
\label{tab:depth_summary_detailed}
}

\end{table}
\FloatBarrier

Two trends are worth highlighting. First, the generalized family and the standard Mermin test track the same depth hierarchy at small and intermediate sizes, as expected when the GHZ coherence is still high. Second, once phase noise becomes the dominant limitation, the generalized family retains a clear advantage: for example, the certified depth rises from $10$ to $14$ at $n=80$ when the Bell test is upgraded from $m=2$ to $m=8$.

\section{Discussion on Bell Loopholes and Experimental Constraints}
\label{sec:bell_loophole}
While our experiment demonstrates high-significance Bell violations at a scale of 80 qubits, it is important to clarify the status of the three primary Bell loopholes—locality, freedom-of-choice, and detection—within the context of our superconducting platform, which are central considerations in experimental tests of Bell inequalities~\cite{sm:Bell1964,sm:clauser1969proposed,sm:hensen2015loophole,sm:giustina2015significant}.

\subsection{Detection Loophole}
The detection loophole is associated with loss and post-selection: if a non-negligible fraction of trials yields no registered outcome and those trials are discarded, a local-realistic model can bias the retained sample~\cite{sm:eberhard1993background,sm:garg1987detector}. 
In the present superconducting-qubit experiment, every qubit is read out in every shot and no trials are discarded on the basis of missing detections. The recorded-outcome rate is therefore unity, and the conventional loss-based detection loophole associated with discarded no-click events is not operative here.

The reported state-assignment fidelity of $99.66\%$, as shown in \cref{fig:gr_errors}, quantifies the accuracy of outcome classification rather than the detection efficiency.
Imperfect state assignment can attenuate the measured correlators, but it does not introduce a sample-selection bias because all shots are retained. 

\subsection{Locality Loophole}
The locality loophole requires that the measurement events of different qubits be space-like separated~\cite{sm:weihs1998violation}, such that no signal (traveling at the speed of light $c$) can communicate the choice of measurement setting from one qubit to another~\cite{sm:shalm2015strong,sm:li2018test,sm:storz2023loophole}. In our integrated circuit, the physical distance between the most distant qubits is on the order of several $\text{cm}$, while the total duration of the measurement sequence (including equatorial rotations and readout) is in the range of $\mu\text{s}$. For a loophole-free test, the measurement must be completed in a time $t < d/c$. At $d \approx 3\text{cm}$, this would require a total measurement time of $\sim 100\text{ ps}$, which is several orders of magnitude faster than current superconducting control electronics. Therefore, the locality loophole remains open in this study, as is typical for integrated solid-state processors.

\subsection{Freedom-of-Choice Loophole}
The freedom-of-choice loophole assumes that the measurement settings might be influenced by a hidden variable shared with the state preparation~\cite{sm:giustina2015significant}. To close this, the settings must be chosen by independent random number generators (RNGs) that are space-like separated from the state-preparation event~\cite{sm:handsteiner2017cosmic,sm:rauch2018cosmic}. In our experiment, the measurement settings used for the Bell estimator were pseudo-randomly generated by software before execution and run in randomized order by an automated control script. This randomization is used to avoid systematically tying particular Bell terms to slow hardware drift. It is not an independent, space-like separated random-choice process. Since both the state preparation and the setting choices are implemented within the same cryostat and centralized control electronics, the freedom-of-choice loophole is not closed.

\subsection{Implications for Benchmarking}
Because our experiment does not simultaneously close all loopholes, it should not be interpreted as a loophole-free test of local realism. Rather, under the stated on-chip assumptions, the results provide a correlation-only Bell benchmark of large-scale GHZ correlations based on directly measured correlators and analytical bounds. The observed violations also reflect the processor's ability to preserve multipartite GHZ coherence, but this coherence interpretation is diagnostic rather than the basis of the Bell-statistical claims. For the present platform, the main open loopholes are locality and freedom-of-choice, whereas loss-based sample selection is not the operative limitation.  Although demonstrated here on superconducting hardware, the finite-setting GMI framework and its analytical bounds are platform independent and are, in principle, compatible with future implementations designed to close the standard Bell loopholes.

\bibliographystyle{apsrev4-2}

\end{document}